\title{\LARGE \bf
Error Analysis of Sampling Algorithms \\ for Approximating Stochastic Optimal Control

}
\author{Anant A. Joshi, Amirhossein Taghvaei, Prashant G. Mehta%<-this % stops a space
%\thanks{This work is supported in part by the AFOSR award
	% 	FA9550-23-1-0060 (Joshi, Mehta), by the National Science Foundation (NSF) award EPCN-2318977  (Taghvaei) and by CCF 2306023 (Meyn).
	% }
	\thanks{A.~A.~Joshi and P.~G.~Mehta are with the Coordinated
	Science Laboratory and the Department of Mechanical Science and
	Engineering at the University of Illinois at Urbana-Champaign
	(e-mail: anantaj2; mehtapg@illinois.edu).  A.~Taghvaei is with the
        Department of Aeronautics and Astronautics at the University of
        Washington Seattle (email: amirtag@uw.edu).}         
        % \thanks{{\cb The arXiv version of this paper \cite{arxiv-version} contains proofs of the technical results and an extended list of references.}}
}
\begin{document}

\maketitle
\thispagestyle{empty}
\pagestyle{empty}

%%%%%%%%%%%%%%%%%%%%%%%%%%%%%%%%%%%%%%%%%%%%%%%%%%%%%%%%%%%%%%%%%%%%%%%%%%%%%%%%
\begin{abstract}

This paper is concerned with the error analysis of two types of sampling algorithms, namely model predictive path integral (MPPI) and an interacting particle system (\IPS) algorithm, that have been proposed in the literature for numerical approximation of the stochastic optimal control.  The analysis is presented through the lens of Gibbs variational principle.  For an illustrative example of a single-stage stochastic optimal control problem, analytical expressions for approximation error and scaling laws, with respect to the state dimension and sample size, are derived. The analytical results are illustrated with numerical simulations.
\end{abstract}

\section{Introduction}

The Gibbs variational problem is given by
\begin{align}\label{eq:KLcost}
\Gibbs{\gamma}{\lag} := &\argmin_{\mu \ll \gamma} \; \left\{ \Div{\mu}{\gamma}  + \mu(c) \right\},
\end{align}
where $\gamma$ is a given probability measure on $\Re^d$ referred to as the {\em prior},  $c:\Re^d \to [0,\infty)$ is a given measurable function that satisfies the integrability constraint $\gamma(c) = \int c(x) \gamma (\ud x) <\infty$, and $\Div{\mu}{\gamma}$ is the relative entropy (or KL divergence) between $\mu$ and $\gamma$.  The minimizer is referred to as the Gibbs measure given by    
\begin{align}\label{eq:optimal-measure}
%\mu\opt(\ud x) := \frac{\exp(-\lag(x))\gamma(\ud x)}{\int_{\R^d}\exp(-\lag(x')) \gamma(\ud x') }.
\mu\opt(\ud x) := \frac{\exp(-\lag(x))\gamma(\ud x)}{\gamma(e^{-c}) }, \quad x\in\Re^d.
\end{align}
Additional technical details and background on this problem and its solution are given in App. \ref{app:gibbs}. 

Although the Gibbs variational problem is classical, there are well-known connections to optimal filtering and optimal control, which have been useful in the development of algorithms.  This paper is concerned with the connection to optimal control, specifically the path-integral control framework pioneered in \cite{kappen-2012,theodorou-2012}.
The framework has led to the development and application of the class of sampling algorithms referred to as the model predictive path integral (MPPI) control \cite{williams-2017}.

MPPI is based on the importance sampling (IS) algorithm. The objective is to numerically approximate the Gibbs measure in terms of samples from the prior. In its basic form, IS proceeds in the following steps:
\begin{enumerate}
\item Sample from the prior $\gamma$
\[
\{X_0^i\}_{i=1}^{N} \iid \gamma.
\]
\item Compute the importance weights
\[
\eta^i := \frac{\exp(-c(X_0^i))}{\sum_{i=1}^{N}\exp(-c(X_0^i))}, \quad 1\leq i\leq N.
\]
\item Obtain an empirical approximation of the Gibbs measure
\[
(\mu\opt)^{(N)} := \sum_{i=1}^{N} \eta^i \delta_{X_0^i}
\]
where $\delta_x$ is the Dirac delta measure at $x\in\Re^d$. 
\end{enumerate}
In the limit as $N \to \infty$, the random measure $(\mu\opt)^{(N)}$ converges weakly in probability to the Gibbs measure $\mu\opt$ \cite[Thm. 1.2]{chatterjee-2018}. %, \cite[Them. 9.2]{owen}.

In the MPPI algorithm, the prior $\gamma$ has the interpretation of the law of a controlled Markov process with a given control (e.g. an open-loop control $u=0$).  The IS algorithm is useful to compute an approximation of the optimal control simply by computing the importance weights.  
This has applications in robotics, 
%where samples are easy to obtain, 
see for example,
\cite{williams-2016,williams-2018,wan-2021,kurtz-2024}.
% \cite{williams-2018,wan-2021,kurtz-2024}.
%e.g., using a simulator~\cite{kurtz-2024}.

\subsection{Aims and original contributions of this paper}
Our goal is to investigate the performance of the MPPI algorithm, as a function of the state dimension $d$ and the number of particles $N$.  For pedagogical reasons, the simplest possible model of a single-stage stochastic optimal control problem (SOCP) is considered as follows:  
\begin{subequations}\label{eq:DTOC1-simple}
\begin{align}
\min_{U \in \U} & \, \E{\half|X_1|^2 + \half |U|^2} \\
%\min_{U(\cdot)} & \E{\half|X_1|^2 + \half\int_0^T U_t \tp R U_t } \\
\text{s.t.} & \quad X_{1} = x_0 + U + V_1, \quad V_1 \sim \normal(\zero,\id_d)
\end{align}
%\label{eq:mppi-soc}
\end{subequations}
where the state at time $t=0$, $x_0\in\Re^d$, is deterministic, $U \in \U$ is the control input, and $\U$ is the space of admissible control inputs (this is introduced in the main part of the paper).  While the formula for optimal control is elementary, a goal of this paper is to describe its construction and  approximation in the path-integral framework.  The paper makes two original contributions:
\begin{enumerate}
\item Apart from the MPPI algorithm, an alternate, interacting particle system algorithm (\IPS), inspired from ensemble Kalman filter (EnKF), is introduced to solve the same sampling task -- that of approximating the optimal control.  
\item For both the MPPI and the \IPS{}  algorithm, a closed-form formula is presented for the approximation error (in estimating the optimal control) as a function of $N$ and $d$. Analytical results are illustrated with simulations. Simulation results are shown in Fig. \ref{fig:1} and details are provided in Sec. \ref{sec:sim} in the main body of the paper. 
\end{enumerate}
It is hoped that a consistent use of the notation and the algorithms, in the simplest possible settings, will invite convergence between various algorithms whose development has proceeded somewhat independently. 

\begin{figure*}
\centering
\includegraphics[scale=0.72]{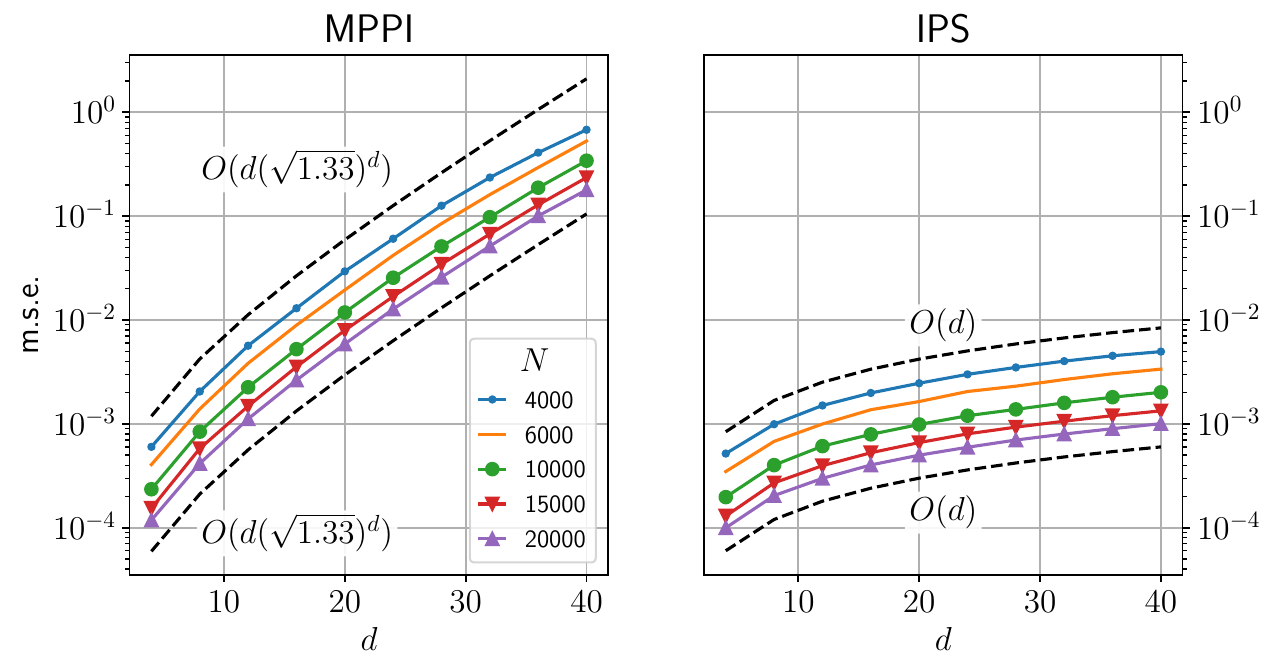}
\caption{Numerical illustration of Prop. \ref{prop:onestep}, with simulation details provided in Sec. \ref{sec:sim}. The y-axis shows the mean square error in estimating the optimal control (m.s.e.)  using the MPPI and IPS algorithm respectively, and the x-axis is the state dimension $d$. The m.s.e. in MPPI grows exponentially in $d$, and m.s.e. in IPS grows affinely in $d$.}
\label{fig:1}
\end{figure*}

\subsection{Related literature}
The use of sampling algorithms to solve optimal control problems is an area of topical interest \cite{theodorou-2012,kappen-2005-jsm,levine-2018,todorov-2008,rawlik-2013,anant-2022,anant-cdc-2024,reich-2024,maoutsa-2020,maoutsa-2022}. 
The idea of using importance sampling to weight an uncontrolled trajectory has been explored in the MPPI literature \cite[Sec. III]{williams-2018}, and also in the reinforcement learning RL literature \cite[Sec. 3]{levine-2018} including in continuous-time settings \cite[Thm. 1]{kappen-2016}.  Related to this paper, the sample complexity analysis of the MPPI algorithm, with respect to the number of samples $N$, appears in \cite{yoon-2022}, \cite{patil-2024}.  Interacting particle algorithms are inspired by the ensemble Kalman filter (EnKF) \cite{katzfuss-2016} and the feedback particle filter \cite{yang-2016} algorithms for filtering.  In contrast to the importance sampling, all the particles in these approaches have equal weights. 
For linear Gaussian filtering, the EnKF is known to avoid the curse of dimensionality exhibited by IS algorithms \cite[Rmk. 3.9]{amir-cips}, which motivates its use in high dimensional applications, e.g. weather prediction \cite{evensen2006}
(see \cite{bishop-2023} for more references).

\subsection{Outline of the paper}
The remainder of this paper is organized as follows: In Sec.~\ref{sec:gibbs}, a path-integral solution of~\eqref{eq:DTOC1-simple} is described in terms of the Gibbs formalism.  Apart from the connection to optimal control, the connection to optimal filtering is also discussed, and duality between the two problems noted.  
In Sec.~\ref{sec:sampling-alg}, the two types of algorithms, namely MPPI and \IPS, are introduced to approximate the optimal control.  The main result, the formulae for approximation errors using the two algorithms, appears in Prop.~\ref{prop:onestep} and illustrated using numerical simulation in~Sec.~\ref{sec:sim}.

%%%%%%%%%%%%%%%%%%%%%%%%%%%%%%%%%%%%%%%%%%%%%%%%%%%%%%%%%%%%%%%%%%%%%%%%%%%%%%%%

%\section*{Notation}
\textbf{Notation:}
% Let $\P(\cdot)$ denote the probability of an event, $\cov{\cdot}$ denote covariance, and 
let $\normal(\cdot,\cdot)$ denote the Gaussian distribution with the mean being the first argument and covariance being the second. 
Let $\id_d$ denote the $d$-dimensional identity matrix, 
% and let $\one_{\{\cdot\}}$ be the indicator function. Let 
$\mu \ll \gamma$ denote $\mu$ is absolutely continuous with respect to $\gamma$ and $\frac{\ud \mu}{\ud \gamma}$ denote the Radon Nikodym derivative.

%Consider a probability space $(\R^d,\B,\gamma)$ where $\R^d$ equipped with Euclidean norm $|\cdot|$, $\B$ is Borel sigma algebra on $\R^d$, and we assume that $\gamma$ has a density with respect to the Lebesgue measure. Let 
%\begin{align*}
%\Upsilon &:= \{ \mu : \mu \text{ is a probability measure on $(\R^d,\B)$, } \\ 
%&\qquad \qquad \mu \ll \gamma, \; \int_{\R^d} |x|^2 \mu(\ud x) < \infty \}.
%\end{align*}
%For any measurable function $f : \R^d \to \R$ and measure $\mu \in \Upsilon$ we define the following notation:
%\begin{align*}
%%e^{-f} &: \R^d \to \R : x \mapsto e^{-f(x)}, 
%%\\ 
%\mu(f) &:= \int_{\R^d} f(x) \mu(\ud x),  \\
%\Div{\mu}{\gamma} &:= \int_{\R^d} \log \frac{\ud \mu}{\ud \gamma}(x) \, \mu(\ud  x) = \mu\left(\frac{\ud \mu}{\ud \gamma} \right). \nonumber
%\end{align*}

\section{Relation of $\Gibbs{\gamma}{c}$ to filtering and control}
\label{sec:gibbs}

\subsection{Relation of $\Gibbs{\gamma}{c}$ to optimal control}
\label{sec:control}

Consider SOCP \eqref{eq:DTOC1-simple}. The space of admissible control inputs is denoted by $\U = \UA$, to emphasize the fact that control input is deterministic (it may depend upon $x_0$ but not upon $V_1$).  The lowercase notation is used to denote an arbitrary element $u\in\UA$.    
   
A straight-forward completion of squares argument is used to show that the optimal control is given by,
\[
u\optada = -\frac{x_0}{2}.
\]
Our aim is to connect $u\optada$ to the Gibbs variational problem.  Noting that the solution of the Gibbs problem is a probability measure, we introduce the following notation to denote the measure of the random variable $X_1$:   
\begin{align*}
\; \;\; \rho^u(\cdot) := \P(X_1 \in \cdot \mid U = u), \; u \in \UA,  %\\
% \text{(Uncontrolled)} & \; \;\; \gamma(\cdot) := \rho^0(\cdot) = \P(X_1 \in \cdot \mid U = \zero), & & 
\end{align*}
where then $\rho^0$ is the measure of the un-controlled state (with input $U=0$) and $\rho^{u\optada}$ is the measure of the state using optimal control $U=u\optada$.  

Based on these definitions, we have the following result:

\begin{proposition}\label{prop:control-gibbs}
    Consider \eqref{eq:DTOC1-simple} with $c(x):= \half |x|^2$. For $U=u \in \UA$,
    \begin{align*}
        \Div{\rho^u}{\rho^0} &= \half|u|^2 \\
        \rho^u(c) &= \half\E{|X_1|^2}.
    \end{align*}
\end{proposition}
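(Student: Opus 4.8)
The plan is to exploit the fact that, for a deterministic control $u\in\UA$, the dynamics \eqref{eq:DTOC1-simple} make $X_1$ a Gaussian random variable whose law I can write down explicitly. First I would identify the two measures: since $X_1 = x_0 + u + V_1$ with $V_1\sim\normal(\zero,\id_d)$ and $u$ deterministic, $\rho^u = \normal(x_0+u,\id_d)$, and in particular $\rho^0 = \normal(x_0,\id_d)$. Both measures are mutually absolutely continuous because the common covariance $\id_d$ is nondegenerate, so the relative entropy in the statement is well defined.

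Next I would compute the Radon--Nikodym derivative from the two Gaussian densities, obtaining $\frac{\ud\rho^u}{\ud\rho^0}(x) = \exp\!\big(\langle u, x-x_0\rangle - \half|u|^2\big)$. Taking logarithms and integrating against $\rho^u$ gives $\Div{\rho^u}{\rho^0} = \E{\langle u, X_1 - x_0\rangle - \half|u|^2 \mid U = u}$. Under $\rho^u$ one has $\E{X_1 - x_0 \mid U=u} = u$, so the right-hand side collapses to $|u|^2 - \half|u|^2 = \half|u|^2$, which is the first identity. (Equivalently, one could simply invoke the closed-form expression for the KL divergence between two Gaussians with a common covariance.)

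The second identity is essentially a restatement of the definition of $\rho^u$: by construction $\rho^u$ is the law of $X_1$ conditioned on $U=u$, hence $\rho^u(c) = \int c(x)\,\rho^u(\ud x) = \E{c(X_1)\mid U=u} = \half\E{|X_1|^2}$, where the final expectation is understood under $U=u$. I do not expect any real obstacle here; the only points that need a little care are verifying the equivalence of the two Gaussian measures (so that $\Div{\rho^u}{\rho^0}<\infty$), evaluating the Gaussian Radon--Nikodym derivative correctly, and being explicit that the expectations appearing in the statement are taken under the controlled law $\rho^u$ rather than under any unconditional distribution.
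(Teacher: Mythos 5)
Your argument is correct and follows essentially the same route as the paper: both identify $\rho^u=\normal(x_0+u,\id_d)$ and $\rho^0=\normal(x_0,\id_d)$, compute $\log\frac{\ud\rho^u}{\ud\rho^0}(x)=(x-x_0)\tp u-\half|u|^2$, and use $\E{X_1-x_0\mid U=u}=u$ (equivalently, that $V_1$ has zero mean) to obtain $\half|u|^2$, with the second identity being immediate from the definition of $\rho^u$. No gaps.
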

\medskip
\begin{proof}
    In App. \ref{app:control-1}.
\end{proof}

Consequently, the SOCP~\eqref{eq:DTOC1-simple} is equivalently expressed as follows:
\begin{align}\label{eq:KL-control}
    \argmin_{u \in \UA} \; \left\{ \Div{\rho^u}{\rho^0}  + \rho^u(c) \right\}.
\end{align}
Now, set
\begin{align}\label{eq:Gibbs-control}
    \rho\optnon := \Gibbs{\rho^0}{c}.
\end{align}
One might conjecture that
\[
\rho\optnon  \stackrel{(??)}{=} \rho^{u\optada}.
\]
In Appendix~\ref{app:control-2}, it is shown that the conjecture is false. The control input that yields the Gibbs measure is in fact non-deterministic given by 
\begin{align*}
    U\optnon &:= u\optada + V_1(\frac{1}{\sqrt{2}} - 1),
\end{align*}
such that
\begin{align}\label{eq:gibbs-measure-control}
        \rho\optnon(\cdot) &= \P(X_1 \in \cdot \mid U = U\optnon). %\\ U\optnon &:= u\optada + V_1(\frac{1}{\sqrt{2}} - 1).
\end{align}  
In Table \ref{tb:duality-control}, the parallels between the SOCP and the Gibbs problem are tabulated.  The difference between the solution of the two problems is described using the Venn diagram in Fig \ref{fig:gibbs}.

The MPPI algorithm is based on a key result described in the following proposition. 

\begin{proposition}\label{prop:control}
Consider SOCP \eqref{eq:DTOC1-simple}.  Then the optimal deterministic control input is related to the Gibbs measure as follows: \begin{align}\label{eq:optimal-control}
        u\optada = \int_{\R^d}x_1 \rho\optnon(\ud x_1) - x_0.
    \end{align}
    % \item Define $U\optnon := u\optada + V_1(\frac{1}{\sqrt{2}} - 1)$ then 
    % \begin{align*}
    %     \rho\optnon(\cdot) := \P(X_1 \in \cdot \mid U = U\optnon).
    % \end{align*}
% \end{enumerate}
\end{proposition}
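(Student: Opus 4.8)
The plan is to obtain the Gibbs measure $\rho\optnon = \Gibbs{\rho^0}{c}$ in closed form and simply read off its mean. First I would identify the prior $\rho^0$: since $X_1 = x_0 + V_1$ when $U=0$, with $V_1 \sim \normal(\zero,\id_d)$, we have $\rho^0 = \normal(x_0,\id_d)$. The integrability constraint needed for \eqref{eq:optimal-measure} holds because $\rho^0(c) = \half\E{|X_1|^2} = \half(|x_0|^2 + d) < \infty$ (cf.\ Prop.~\ref{prop:control-gibbs}), so \eqref{eq:optimal-measure} applies with $\gamma = \rho^0$ and $c(x) = \half|x|^2$.

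Next I would substitute the Gaussian density of $\rho^0$ into \eqref{eq:optimal-measure}, giving $\rho\optnon(\ud x) \propto \exp(-\half|x|^2 - \half|x-x_0|^2)\,\ud x$, and complete the square in $x$. A short computation gives $-\half|x|^2 - \half|x-x_0|^2 = -|x - \tfrac{x_0}{2}|^2 - \tfrac14|x_0|^2$, hence $\rho\optnon = \normal(\tfrac{x_0}{2},\tfrac12\id_d)$; the $x_0$-independent normalization constant and the additive constant $-\tfrac14|x_0|^2$ play no role in the mean. Therefore $\int_{\R^d} x_1\,\rho\optnon(\ud x_1) = \tfrac{x_0}{2}$, and so $\int_{\R^d} x_1\,\rho\optnon(\ud x_1) - x_0 = -\tfrac{x_0}{2} = u\optada$, which is precisely \eqref{eq:optimal-control}.

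A second, more probabilistic route avoids Gaussian algebra altogether: by \eqref{eq:gibbs-measure-control}, $\rho\optnon$ is the law of $X_1$ under the non-deterministic control $U\optnon = u\optada + V_1(\tfrac{1}{\sqrt 2} - 1)$, so along those dynamics $X_1 = x_0 + U\optnon + V_1 = x_0 + u\optada + \tfrac{1}{\sqrt 2}V_1$; taking expectations and using $\E{V_1} = \zero$ together with the fact that $u\optada$ is deterministic gives $\int_{\R^d} x_1\,\rho\optnon(\ud x_1) = x_0 + u\optada$, and rearranging yields \eqref{eq:optimal-control}. I would likely present the first route for self-containedness. There is no serious obstacle here — the statement is in essence a bookkeeping identity isolating the mean of a Gaussian; the only points warranting care are carrying out the completion of squares so the shift is exactly $x_0/2$, noting that the normalization and the $x_0$-dependent additive constant do not affect the mean, and (for the probabilistic route) checking that $\E{U\optnon} = u\optada$ because $V_1$ is centered.
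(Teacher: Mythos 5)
Your proposal is correct and follows essentially the same route as the paper: the paper likewise computes $\rho\optnon$ explicitly via \eqref{eq:optimal-measure} and completion of squares to obtain $\normal(\tfrac{x_0}{2},\tfrac12\id_d)$ (App.~\ref{app:control-2}) and then reads off the mean to conclude \eqref{eq:optimal-control} in App.~\ref{app:control-3}, where it additionally records the equivalent characterization $u\optada=\argmin_u\Div{\rho\optnon}{\rho^u}$ used in Rmk.~\ref{rmk:control-gibbs}. Your second, probabilistic route is a harmless restatement since \eqref{eq:gibbs-measure-control} itself rests on the same Gaussian computation.
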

\vspace{5pt}
\begin{proof}
In App. \ref{app:control-3}.
\end{proof}

\begingroup
\renewcommand{\arraystretch}{1.3}
\begin{table}
\centering
\begin{tabular}{ccc}
\toprule
\textbf{Gibbs, Eq. \eqref{eq:KLcost}} & \multicolumn{2}{c}{\textbf{SOCP, Eq. \eqref{eq:DTOC1-simple}}} \\
Quantity & Quantity & Meaning  \\ \midrule
%$\mu$ & Controlled measure & $\rho^{u}$ \\
$\gamma$ & $\rho^0$ & Uncontrolled measure \\
$c(x)$ & $\half |x|^2$ & State cost  \\
$\mu\opt$ & $\rho\optnon$ & Gibbs measure  \\
%$\Upsilon$ & $\{ \rho^{u} : U \in \R \}$ \\ 
\bottomrule
\end{tabular}
\caption{Equivalence between \eqref{eq:KLcost} and \eqref{eq:DTOC1-simple}.}
\label{tb:duality-control}
\end{table}
\endgroup

% \begin{tikzpicture}
%     % Draw the inner ellipse
%     \draw[black, thick] (0,0) ellipse (1cm and 0.5cm);
    
%     % Draw the outer ellipse
%     \draw[black, thick] (0,0) ellipse (2cm and 1cm);
    
%     % Place notation \mu at the top of the outer ellipse
%     \node at (0,1.6) {\(\mu\)};
% \end{tikzpicture}

\begin{figure}[h]
    \centering
    \includegraphics[scale=0.3]{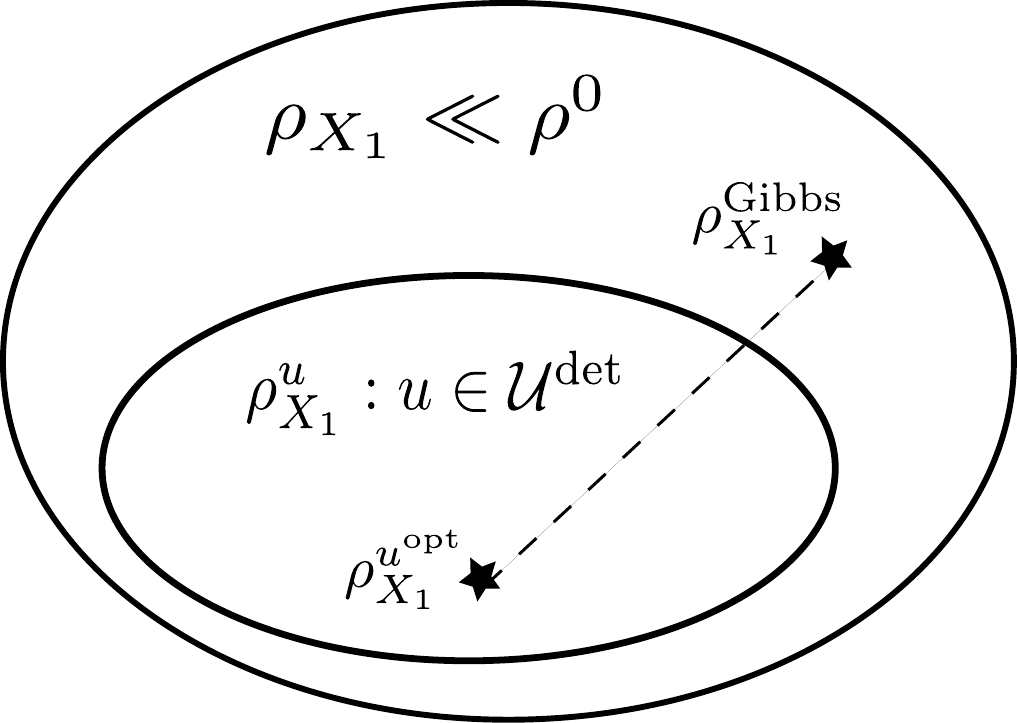}
    \caption{Pictorial description of \eqref{eq:gibbs-measure-control}}
    \label{fig:gibbs}
\end{figure}

\begin{remark}\label{rmk:control-gibbs}
% We make note of an important point 
% \begin{align}\label{eq:gibbs}
%     \rho^u(\cdot) \ne \rho\optnon \forall u \in \UA,
% \end{align}
% which is explained in detail
% % seen from the expressions for $\rho^u$ and $\rho\optnon$ 
% in App. \ref{app:control}, see also Fig. \ref{fig:gibbs}. The idea is that $\rho\optnon$ has a covariance $\half\id_d$, while $\rho^u$ always has covariance $\id$ for $u \in \UA$. In fact, 
%     \begin{align*}
%         \rho\optnon(\cdot) &:= \P(X_1 \in \cdot \mid U = U\optnon), \\ U\optnon &:= u\optada + V_1(\frac{1}{\sqrt{2}} - 1).
%     \end{align*}
% Therefore, 
Another approach to compute the optimal control is to solve a variational problem (see \cite[(16)]{williams-2018}),
    \begin{align*}
        u\optada = \argmin_{u \in \UA}\Div{\rho\optnon}{\rho^u},
    \end{align*}
    The calculation of the same leading to the formula \eqref{eq:optimal-control} is also given in App. \ref{app:control-3}.
\end{remark}

\begin{remark}
    To many control theorists, the appearance of non-adapted control may strike as odd. The important point to note is that the formula for the adapted (deterministic) optimal control $u\optada$ is given in terms of $\rho\optnon$ as in \eqref{eq:optimal-control}.  Moreover, the IS procedure is used to approximate the Gibbs measure in terms of samples from $\rho^0$ (which is the uncontrolled state). This remarkable idea was described in the pioneering works \cite{kappen-2012,theodorou-2012}.
    % Prop. \ref{prop:control} is significant since it allows us to build a sampling based algorithm to find the optimal control, since the optimal control is expressed as the mean of a distribution on the space of path measures. However, we cannot produce $\rho\optnon$ using an adapted control policy, so we make use of importance sampling procedure as explained in detail later in this document. This procedure is a central step in MPPI \cite[Sec. III]{williams-2018}.
\end{remark}

\begin{remark}
    A limitation of the path integral control approach is that the control and noise need to act in the same channel, see \cite[Sec. II]{williams-2018}. A more general form of the dynamics in \eqref{eq:DTOC1-simple} is
    \begin{align*}
        X_1 = x_0 + B(U + V_1), \quad V_1 \sim \normal(\zero,\id_d).
    \end{align*}
    We work with the case $B=\id_d$ in order to describe the main ideas clearly without additional notational burden.
\end{remark}

% \begin{remark}
%     We work with the special case of both noise and control coefficient being identity, since it makes the mathematics simple while conveying the intuition. The approach can be extended to the case when the control and noise have the same coefficient, and the control cost is same as covariance of noise, as done in \cite{williams-2018} and will be covered in follow up work.
% \end{remark}

% \begin{remark}
%     For a discrete-time model, \eqref{eq:optimal-control} is obtained by simplifying the observation
%     \begin{align*}
%         u\optada = \argmin_{u \in \UA}\Div{\rho\optnon}{\rho^u},
%     \end{align*}
%     with $\rho^u(\cdot) := \P(X_1\in\cdot\mid U=u)$, as shown during the proof of Prop. \ref{prop:control}. It also appears in \cite[(16)]{williams-2018}.
% \end{remark}
\begin{remark}
    In this work, we consider measure on the path space $\P(X_1 \in \cdot \, \mid U)$. Equivalently, one may study the SOCP \eqref{eq:DTOC1-simple} using measure on the noise space, that is, $\P(V_1 \in \cdot \, \mid U)$, see for example, \cite{williams-2018}. For \eqref{eq:DTOC1-simple}, both approaches yield the same result.
\end{remark}
    % In \cite{williams-2018} authors work with m while we use For the example in consideration, both approaches yield the same result. However, for the case of general coefficients, they are different, and will be studied in the future work.

    % In the proof of Prop. \ref{prop:control}, one sees that $\rho\optada \ne \rho\optnon$. While this is not ideal, since now solving the Gibbs problem yields the non-adapted optimal control, one also sees that both measures share the same mean. Moreover, deterministic controls can only change the mean of the state distribution and not the variance. This observation motivates the conjecture $u\optada = \argmin_{u \in \UA}\Div{\rho\optnon}{\rho^u}$ which finally yields the formula in Prop. \ref{prop:control}.

\subsection{Relation of $\Gibbs{\gamma}{c}$ to filtering, and duality}
\label{sec:filter}
The dual counterpart of the single stage SCOP \eqref{eq:DTOC1-simple} is the single stage filtering model
% Consider the measurement model
\begin{align}\label{eq:DTF}
Z_1 = h(Y_0) + W_1, \quad W_1 \sim\normal(0_m,\RF)
\end{align}
where $h : \R^d \to \R^m$ is a Borel measurable function
% and $W_1 \sim\normal(0_m,\RF)$ 
and $Y_0$ is assumed to be independent of $W_1$. The random variable $Y_0$ represents the hidden state of a system,  $Z_1$ denotes the observation, and $W_1$ denotes the observation noise.  

The goal of the filtering problem is to find the conditional measure of $Y_0$ given $Z_1$, known as the \textit{posterior}.

By defining $\gamma(\cdot)$ to be the prior and the function $c(\cdot)$ to be log likelihood, as 
%\begin{subequations}\label{eq:DTF}
\begin{subequations}\label{eq:filter-qty}
    \begin{flalign}
\text{(Prior)} & \quad  \gamma(\cdot) := \P(Y_0 \in \cdot),& & \\
\text{(Log likelihood)} & \quad  \lag(y;z) := \half|z -  h(y)|^2_{\RF\inv}, \; y \in \R^d,& & 
% \text{(Posterior)} & \quad \post(\cdot \mid z) := \P(X_0 \in \cdot \mid Z=z).& & 
%\text{(Observation)} & \qquad \qquad Z_1 = h(X_0) + W_1  & &
\end{flalign}
\end{subequations}
%\end{subequations}
%\todo[inline]{I'm not very convinced about using $\mu\opt$ in place of $F_{X|Z}$ since $\mu\opt$, was already defined as the optimum of a certain problem, so are we redefining it?}
we arrive at a Gibbs variational formulation of the posterior, presented in the following result. The proof appears in App. \ref{app:filter}.
\begin{proposition}\label{prop:filter}
%Define $\lag(x) := \half|z -  h(x)|^2_{\RF\inv}$ for all $x \in \R^d$. Then we have the equivalence,
Consider the filtering model~\eqref{eq:DTF} and define $\gamma(\cdot)$ and $c(\cdot)$ as~\eqref{eq:filter-qty}. Then 
we have the equivalence,
\begin{align*}
%F_{X_0|Z_1}(\cdot,z) 
\P(Y_0 \in \cdot \mid Z=z) = \Gibbs{\gamma}{c(\cdot;z)}.
\end{align*}
\end{proposition}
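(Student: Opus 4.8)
The plan is to reduce the statement to the explicit formula~\eqref{eq:optimal-measure} for the Gibbs measure together with the elementary Bayes rule for the model~\eqref{eq:DTF}. First I would record the observation likelihood: since $W_1 \sim \normal(0_m,\RF)$ is independent of $Y_0$, conditioned on $\{Y_0 = y\}$ the observation $Z_1$ is Gaussian with mean $h(y)$ and covariance $\RF$, and hence admits the density
\[
p(z\mid y) = (2\pi)^{-m/2}(\det\RF)^{-1/2}\exp\big(-\half|z-h(y)|^2_{\RF\inv}\big) = \kappa\, e^{-c(y;z)}, \quad z\in\R^m,
\]
with respect to Lebesgue measure on $\R^m$, where $\kappa := (2\pi)^{-m/2}(\det\RF)^{-1/2}$ does not depend on $y$ and $c(y;z)$ is as in~\eqref{eq:filter-qty}. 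Joint measurability of $(y,z)\mapsto p(z\mid y)$ follows from measurability of $h$, so this defines a bona fide observation kernel.

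Next I would apply Bayes' rule. Because $c(\cdot;z)\ge 0$ we have $0 < e^{-c(\cdot;z)}\le 1$, so the normalizing constant $\gamma(e^{-c(\cdot;z)}) = \int_{\R^d} e^{-c(y;z)}\gamma(\ud y)$ lies in $(0,1]$; in particular it is finite and strictly positive, which is all that is needed. In its measure-theoretic (conditional-expectation) form, Bayes' rule gives, for every bounded measurable $\varphi:\R^d\to\R$,
\[
\E{\varphi(Y_0)\mid Z_1 = z} = \frac{\int \varphi(y)\, p(z\mid y)\,\gamma(\ud y)}{\int p(z\mid y)\,\gamma(\ud y)} = \frac{\int \varphi(y)\, e^{-c(y;z)}\,\gamma(\ud y)}{\gamma(e^{-c(\cdot;z)})},
\]
where the constant $\kappa$ cancels between numerator and denominator. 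Since this holds for all bounded measurable $\varphi$, it identifies the posterior law as
\[
\P(Y_0\in\ud y\mid Z_1 = z) = \frac{e^{-c(y;z)}\,\gamma(\ud y)}{\gamma(e^{-c(\cdot;z)})}.
\]
Comparing this with~\eqref{eq:optimal-measure} applied to the prior $\gamma$ and the cost $c(\cdot;z)$ yields exactly $\Gibbs{\gamma}{c(\cdot;z)}$, which proves the equivalence.

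I do not expect any deep obstacle here; the proposition is essentially the statement that the posterior is the prior times the likelihood, suitably normalized. The only point that warrants care is the rigorous invocation of Bayes' rule when the prior $\gamma$ is an arbitrary probability measure on $\R^d$, not necessarily dominated by Lebesgue measure — this is why I would use the conditional-expectation version above rather than a ratio of densities. A minor additional remark worth making is that, although the integrability hypothesis $\gamma(c(\cdot;z)) < \infty$ is part of the Gibbs variational problem~\eqref{eq:KLcost} and is needed for the variational characterization to be meaningful, it plays no role in the identity proved above, which uses only finiteness and positivity of the normalizing constant.
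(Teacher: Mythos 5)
Your proposal is correct and follows essentially the same route as the paper's proof: apply Bayes' rule to write the posterior as the prior reweighted by the likelihood $e^{-c(y;z)}$ and compare with the explicit Gibbs formula~\eqref{eq:optimal-measure}. Your version merely supplies additional detail (the explicit Gaussian observation kernel, the cancellation of the normalizing constant $\kappa$, and the conditional-expectation form of Bayes' rule valid for a prior not dominated by Lebesgue measure) that the paper leaves implicit.
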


% \begin{proof}
% See App. \ref{app:filter}.
% \end{proof}
%\begin{remark}
%Prop. \ref{prop:control} is a restatement of the well known Bayes' rule \cite{amir-2024-bayes}, and is understood as
%\begin{align*}
%\text{Posterior} = \Gibbs{\text{Prior}}{-\log(\text{likelihood})}.
%\end{align*}
%See also summary in Table \ref{tb:duality-filter}.
%\end{remark}
\medskip

% for a summary.

% \begingroup
% \renewcommand{\arraystretch}{1.2}
% \begin{table}
% \centering
% \begin{tabular}{ccc}
% \toprule
% \textbf{Gibbs, Eq. \eqref{eq:KLcost}} & \multicolumn{2}{c}{\textbf{Filter Eq. \eqref{eq:DTF}}} \\ 
% & Qualitative & Quantitative  \\ \midrule
% %$\mu$ & Any measure on $\R^d$ in $\Upsilon$ & $\mu$ \\
% $\gamma$ & Prior & $\gamma$ \\
% $c(x)$ & Measurement update & $\half |z - x|_{R\inv}^2$ \\
% $\mu\opt$ & Posterior & $F_{X_0|Z_1}$ \\
% %$\Upsilon$ & $\{ \rho^{u} : U \in \R \}$ \\ 
% \bottomrule
% \end{tabular}
% \caption{Equivalence between \eqref{eq:KLcost} and \eqref{eq:DTF}.}
% \label{tb:duality-filter}
% \end{table}
% \endgroup

\begingroup
\renewcommand{\arraystretch}{1.3}
\begin{table}
\centering
\begin{tabular}{ccc}
\toprule
\textbf{Gibbs, Eq. \eqref{eq:KLcost}} & \multicolumn{2}{c}{\textbf{Filter Eq. \eqref{eq:DTF}}} \\ 
Quantity & Quantity & Meaning  \\ \midrule
%$\mu$ & Any measure on $\R^d$ in $\Upsilon$ & $\mu$ \\
$\gamma$ & $\P(X_0 \in \cdot)$ & Prior \\
$c(x;z)$ & $\half |z - x|_{R\inv}^2$ & Log likelihood  \\
$\mu\opt$ & $\post(\cdot \mid z)$ & Posterior  \\
%$\Upsilon$ & $\{ \rho^{u} : U \in \R \}$ \\ 
\bottomrule
\end{tabular}
\caption{Equivalence between \eqref{eq:KLcost} and \eqref{eq:DTF}.}
\label{tb:duality-filter}
\end{table}
\endgroup

\noindent\textbf{Duality:}
The utility of the filtering model \eqref{eq:DTF} in solving SOCP \eqref{eq:DTOC1-simple} is given by the following duality result.

\begin{proposition}\label{prop:duality}
    Consider \eqref{eq:DTF}, with $Y_0 \sim \normal(x_0,\id_d)$, and $h(x) := x$,
    % $X_0 \sim \normal(x_0,\id_d)$, $W_1 \sim \normal(\zero,\id_d)$ and $h(x) := x$
    % \begin{align*}
    %     X_0 \sim \normal(x_0,\id_d), \; W_1 \sim \normal(\zero,\id_d) \text{ and } h(x) := x,
    % \end{align*}
    that is, 
    % \begin{align}\label{eq:dual-filter}
    %     Z_1 &= X_0 + W_1. %\; X_0 \sim \normal(x_0,\id_d), \; W_1 \sim \normal(\zero,\id_d). 
    % \end{align}
     \begin{align}\label{eq:dual-filter}
        Z_1 &= Y_0 + W_1, \; Y_0 \sim \normal(x_0,\id_d), \; W_1 \sim \normal(\zero,\id_d). 
    \end{align}
    Then the posterior of \eqref{eq:dual-filter} denoted by $\post$ satisfies 
    \begin{align*}
        \post(\cdot \mid \zero) = \rho\optnon(\cdot).
    \end{align*}
    where $\rho\optnon(\cdot)$ is defined in~\eqref{eq:Gibbs-control}.
%     with prior $\gamma = \normal(x_0,\id_d)$ and measurement model $h(x) := x$, 
%     \begin{align*}
%         a
%     \end{align*}
%     that is,
%     \begin{align}\label{eq:dual-filter}
%         Z_1 &= X_0 + W_1, \; X_0 \sim \normal(x_0,\id_d), \; W_1 \sim \normal(\zero,\id_d). 
%     \end{align}
% Then the posterior of \eqref{eq:dual-filter} denoted $\post$ satisfies $\post(\cdot) = \rho\optnon(\cdot)$.
\end{proposition}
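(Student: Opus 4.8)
The plan is to deduce this directly from Proposition~\ref{prop:filter} together with an identification of the relevant prior and cost. First I would specialize Proposition~\ref{prop:filter} to the filtering model~\eqref{eq:dual-filter}, for which $h(x) = x$, $\RF = \id_d$, and the prior is $\gamma(\cdot) = \P(Y_0 \in \cdot) = \normal(x_0, \id_d)$; the log-likelihood function in~\eqref{eq:filter-qty} then reads $c(y;z) = \half|z - y|^2$. Evaluating at the particular observation $z = \zero$ gives $c(y;\zero) = \half|y|^2$, which is exactly the state cost $c$ appearing in Proposition~\ref{prop:control-gibbs} and in the definition~\eqref{eq:Gibbs-control} of $\rho\optnon$. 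Hence Proposition~\ref{prop:filter} yields $\post(\cdot \mid \zero) = \Gibbs{\gamma}{c(\cdot;\zero)}$.

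It then remains to recognize that the prior $\gamma = \normal(x_0,\id_d)$ coincides with the uncontrolled measure $\rho^0$: setting $U = 0$ in the dynamics of~\eqref{eq:DTOC1-simple} gives $X_1 = x_0 + V_1$ with $V_1 \sim \normal(\zero,\id_d)$, so $\rho^0 = \normal(x_0,\id_d) = \gamma$. Combining the two identifications, $\post(\cdot \mid \zero) = \Gibbs{\gamma}{c(\cdot;\zero)} = \Gibbs{\rho^0}{c} = \rho\optnon$, which is the claimed equality.

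Alternatively, one can verify the identity by a direct computation: using~\eqref{eq:optimal-measure}, both $\post(\ud x \mid \zero)$ and $\rho\optnon(\ud x)$ are proportional to $\exp\bigl(-\half|x|^2 - \half|x - x_0|^2\bigr)\,\ud x$, and completing the square shows each equals $\normal\bigl(\frac{x_0}{2},\, \frac{1}{2}\id_d\bigr)$; this is moreover consistent with~\eqref{eq:optimal-control}, since $\frac{x_0}{2} - x_0 = -\frac{x_0}{2} = u\optada$. I do not expect any genuine obstacle here: the content is essentially bookkeeping — matching the prior of the dual filter with the uncontrolled law $\rho^0$, and observing that fixing the observation at $z = \zero$ turns the filtering log-likelihood into the control-problem state cost $c$.
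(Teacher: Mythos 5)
Your proposal is correct. Your primary route differs from the paper's: the paper proves the proposition by a direct Bayes-rule computation, writing $\post(\ud x \mid \zero) \propto \exp\bigl(-\half|x|^2 - \half|x-x_0|^2\bigr)\ud x$ and completing the square to obtain $\normal\bigl(\frac{x_0}{2}, \frac{1}{2}\id_d\bigr)$, which it then matches against the explicit Gaussian form of $\rho\optnon$ computed in App.~\ref{app:control-2}. You instead argue at the level of the Gibbs map: Prop.~\ref{prop:filter} gives $\post(\cdot\mid\zero) = \Gibbs{\gamma}{c(\cdot;\zero)}$, and since the dual filter's prior $\gamma = \normal(x_0,\id_d)$ equals the uncontrolled law $\rho^0$ and $c(\cdot;\zero) = \half|\cdot|^2$ equals the state cost, the conclusion follows from the definition~\eqref{eq:Gibbs-control} without evaluating any Gaussian integral. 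Your identification-based argument is more conceptual and makes the duality transparent (it is really just the statement that the two Gibbs problems have identical data), whereas the paper's computation has the side benefit of producing the explicit form $\normal\bigl(\frac{x_0}{2},\frac{1}{2}\id_d\bigr)$, which is reused later (e.g.\ in the proof of Prop.~\ref{prop:meanfield}). Your second, alternative computation coincides with the paper's proof, so nothing is missing either way.
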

\begin{proof}
    % We obtain $\post(\cdot)$ by applying Bayes' rule, and comparing with $\rho\optnon$ gives the result.
    See App. \ref{app:duality}.
\end{proof}

To facilitate clarity for the reader, a summary of the relationship between filtering and Gibbs variational formulation, and the duality, are presented in Table \ref{tb:duality-filter} and \ref{tb:duality}, respectively. 

% See Table \ref{tb:duality} for a summary of the duality. 
\begin{remark}\label{rmk:duailty}
    The optimal control can be computed from the posterior of \eqref{eq:dual-filter} using Prop. \ref{prop:control} and \ref{prop:duality} as
    \begin{align*}
        u\optada = \int_{\R^d} \, x\post(\ud x \mid \zero) \, - x_0.
    \end{align*}
\end{remark}

\begingroup
\renewcommand{\arraystretch}{1.3}
\begin{table*}
\centering
\begin{tabular}{cp{1.9cm}p{1.6cm}ccc}
\toprule
\textbf{Gibbs, Eq. \eqref{eq:KLcost}} & \multicolumn{2}{c}{\textbf{Meaning}} &  \multicolumn{3}{c}{\textbf{Quantity}}  \\ 
 & SOCP, Eq. \eqref{eq:DTOC1-simple} & Filter, Eq. \eqref{eq:DTF} &SOCP, Eq. \eqref{eq:DTOC1-simple} & Filter, Eq. \eqref{eq:DTF}  & Duality\\ \midrule
%$\mu$ & Controlled measure ($\rho^{u}$) & Any measure on $\R^d$ in $\Upsilon$ & $\exp({-\half\left|x - x_0 - u_0 \right|^2})$ & $\mu$ & - \\ \midrule
$\gamma$ & Uncontrolled measure & Prior & $\exp({-\half\left|x - x_0 \right|^2})$ & $\exp(-\half\left|x - \theta\right|_{\Theta}^2)$ & $\theta = x_0$, $\Theta = \id_d$ \\ \midrule
$c(x)$ & State cost & Log likelihood & $\half |x|^2$ & $\half |z - x|_{R\inv}^2$ & $z = \zero$, $R = \id_d$ \\ \midrule
{$\mu\opt$} & {Gibbs measure} & {Posterior}  & {$\rho\optnon$}  & {$\post(\cdot \mid \zero)$}  & $\exp(-\left|x - \frac{x_0}{2}\right|^2)$  \\
% \multirow{2}{*}{$\mu\opt$} & \multirow{2}{1.6cm}{Non-adapted optimal measure ($\rho\optnon$)} & \multirow{2}{1.5cm}{Posterior ($F_{X_0|Z_1}$)}  & \multirow{2}{*}{$\gamma \cdot \exp(-c(x)) $}  & \multirow{2}{*}{$\gamma \cdot \exp(-c(x)) $}  & $\rho\optnon = F_{X_0|Z_1}$ \\
% %\multirow{2}{*}{$\mu^*$} & \multirow{2}{*}{Non-adapted optimal density ($\rho^*$)} & \multirow{2}{*}{Posterior ($f_{X|Z}$)} & \multirow{2}{*}{$\rho^0 \cdot \exp(-c(x)) $} & \multirow{2}{*}{$f_X\cdot \exp(-c(x))$$} & $\rho^* = f_{X|Z}$ \\
% & & & & & $= \exp(-\left|x - \frac{x_0}{2}\right|^2) $\\
% &&&&& \\
%$\mu^*$ & Non-adapted optimal density ($\rho^*$) & Posterior ($f_{X|Z}$) & $\rho^0 \cdot \exp(-c(x)) $ & $f_X\cdot \exp(-c(x))$ & $\rho^* = f_{X|Z}$  $= \exp(-\left|x - \frac{x_0}{2}\right|^2) $ \\ \midrule
%$\Upsilon$ & $\{ \rho^{u} : U \in \R \}$ \\ 
\bottomrule
\end{tabular}
\caption{Duality between \eqref{eq:DTOC1-simple} and \eqref{eq:DTF} through which the control problem can be posed as a filtering problem and solved using sampling techniques.}
\label{tb:duality}
\end{table*}
\endgroup

\section{Sampling algorithms to find optimal control}
\label{sec:sampling-alg}

%\subsection{Algorithm for $\Gibbs{\gamma}{c}$}
%We would like to sample from the optimal measure \eqref{eq:optimal-measure} which we recall
%\begin{align*}
%%\mu\opt(\ud x) := \frac{\exp(-\lag(x))\gamma(\ud x)}{\int_{\R^d}\exp(-\lag(x')) \gamma(\ud x') }.
%\mu\opt(\ud x) := \frac{\exp(-\lag(x))\gamma(\ud x)}{\gamma(\exp(-\lag(\cdot))) }.
%\end{align*}
%For a function $f : \R^d \to \R$, we let 
%\begin{align*}
%X_0^i \iid \gamma, \quad \text{and} \quad \eta^i := \exp(-c(X_0^i))
%\end{align*}
%to get the approximation
%\begin{align*}
%\mu\opt(f) \approx \sum_{i=1}^{N} \eta^i f(X_0^i).
%\end{align*}
%\begin{remark}
%The approximation converges as $N \to \infty$ \cite{crisan-2002}. However, in high dimensions, the error in estimation scales exponentially with dimension \cite{amir-cips}.
%\end{remark}

\subsection{MPPI}
% The MPPI algorithm \cite{williams-2018} to find the optimal control is described now.
% In \eqref{eq:KL1}, 
In Prop. \ref{prop:control}, computing the optimal control
% computing $\Div{\rho\optnon}{\rho^{u}}$ 
involves an expectation with $\rho\optnon$ which may not be possible, so we use importance sampling with a  known control $\bar{u} \in \UA$, to sample from $\rho^{\bar{u}}(\cdot) := \P(X_1 \in \cdot \mid U = \bar{u})$:
\begin{align}
&\int_{\R^d} x \, \ud\rho\optnon(x) = \int_{\R^d} \left( x \, \frac{\ud \rho\optnon}{\ud \rho^{\bar{u}} }(x) \right) \ud \rho^{\bar{u}}(x) \nonumber \\ 
\noalign{\vskip5pt}
&\qquad \qquad \qquad \qquad = \int_{\R^d} x\, \eta(x;\bar{u})  \, \ud \rho^{\bar{u}}(x) \label{eq:iscontrol}% \\ \noalign{\vskip5pt} 
\end{align}
where the importance sampling weight is
\begin{align}\label{eq:eta}
 &\eta(x;\bar{u}) := \frac{\exp\left( -\half \left| x + \bar{u}\right|^2 \right) }{\int_{\R^d}{\exp\left( -\half \left| x' + \bar{u}\right|^2 \right)} \, \ud \rho^{\bar{u}}(x')}.
\end{align}
% \begingroup
% \begin{align*}
% \E[X_1 \sim \rho\optnon]{X_1} &= \E[X_1 \sim \rho^{\bar{u}}]{X_1 \frac{\ud \rho\optnon}{\ud \rho^u} } \\ 
% \noalign{\vskip5pt}
% %&=  \E[X_1 \sim \rho^{\bar{u}}]{X_1 \frac{\ud \rho\optnon}{\ud \rho^0} \frac{\ud \rho^0}{\ud \rho^{\bar{u}}}} \\  \noalign{\vskip5pt}
% &= \E[X_1 \sim \rho^{\bar{u}}]{\eta(\bar{u},X_1) X_1 } \\  \noalign{\vskip5pt}
% \eta(\bar{u},X_1) &:= \frac{\exp\left( -\half \left| X_1 + \bar{u}\right|^2 \right) }{\E[X_1\sim\rho^{\bar{u}}]{\exp\left( -\half \left| X_1 + \bar{u}\right|^2 \right)}}
% \end{align*}
% \endgroup
%where we used
%\begin{align*}
%\frac{\ud \mu\opt}{\ud \mu} &= \frac{\exp\left(-\half |X_1|^2 \right)}{\E[X_1\sim\mu]{\exp\left(-\half |X_1|^2 \right)}}, \\ \\
%\frac{\ud\mu}{\ud \rho^{\bar{u}}} &= \frac{\exp\left(\half|\bar{u}|^2 - (X_1 - x_0\tp\bar{u})\right)}{\E[X_1\sim\rho^{\bar{u}}]{\exp\left(\half|\bar{u}|^2 - (X_1 - x_0\tp\bar{u})\right)}}. \\
%\end{align*}
%This naturally leads us to the importance sampling based particle algorithm for \eqref{eq:KL1}, with $\{X_0^i\}_{i=1}^{N} \iid \gamma$,
%\begin{align*}
%\hat{X}_1 =\sum_{i=1}^{N} \eta^i X_1^i , \; \;
%\eta^i &:= \frac{\exp\left( -\half \left| X_1^i + \bar{u}\right|^2 \right) }{\frac{1}{N}\sum_{i=1}^N \exp\left( -\half \left| X_1^i + \bar{u}\right|^2 \right) }
%%\quad
%%\tilde{\eta}^i := \exp\left( -\half \left| X_1^i + \bar{u}\right|^2 \right) 
%\end{align*}
%so $u\optada \approx (u\optada)^{(N)} := \hat{X}_1 - x_0$. See appendix \ref{app:is} for details of the same, and an algorithm appears in Algorithm \ref{alg:}.

This naturally leads us to the importance sampling based particle algorithm \cite[Alg. 1]{williams-2018} as follows:
\begin{enumerate}
    \item Sample from prior $\gamma$:
    \begin{align*}
    &\{V_0^i\}_{i=1}^{N} \iid \normal(\zero,\id_d) \\
    &X_1^i := x_0 + \bar{u} + V_0^i,\quad 1 \le i \le N, % \qquad \{\implies \{X_1^i\}_{i=1}^{N} \iid \rho^u\} \\
    \end{align*}
    which gives that $\{X_1^i\}_{i=1}^{N} \iid \rho^u$.
    \item Construct importance weights:
    \begin{align*}
        \eta^i := \frac{\exp\left( -\half \left| X_1^i + \bar{u}\right|^2 \right) }{\frac{1}{N}\sum_{i=1}^N \exp\left( -\half \left| X_1^i + \bar{u}\right|^2 \right) }, \; \; 1 \le i \le N.
    \end{align*}
    \item Find empirical approximation of optimal control
    \end{enumerate}
    % \begin{align}\label{eq:oc-mppi}
    %     % &\hat{X}_1 =\frac{1}{N}\sum_{i=1}^{N} \eta^i X_1^i \\
    %     & u\optada \approx (u\optada_{\text{MPPI}})^{(N)} := \frac{1}{N}\sum_{i=1}^{N} \eta^i X_1^i - x_0.
    % \end{align}
%
     \boxedeq{eq:oc-mppi}{
        % &\hat{X}_1 =\frac{1}{N}\sum_{i=1}^{N} \eta^i X_1^i \\
        & u\optada \approx (u\optada_{\text{MPPI}})^{(N)} := \frac{1}{N}\sum_{i=1}^{N} \eta^i X_1^i - x_0.
        }

% This naturally leads us to the importance sampling based particle algorithm for as follows:
% \begin{align*}
% &\{V_0^i\}_{i=1}^{N} \iid \normal(\zero,\id_d) \\
% &X_1^i := x_0 + \bar{u} + V_0^i \qquad \{\implies \{X_1^i\}_{i=1}^{N} \iid \rho^u\} \\
% &\eta^i := \frac{\exp\left( -\half \left| X_1^i + \bar{u}\right|^2 \right) }{\frac{1}{N}\sum_{i=1}^N \exp\left( -\half \left| X_1^i + \bar{u}\right|^2 \right) } \\
% &\hat{X}_1 =\frac{1}{N}\sum_{i=1}^{N} \eta^i X_1^i , \quad u\optada \approx (u\optada)^{(N)} := \hat{X}_1 - x_0.
% \end{align*}
See 
%algorithm \ref{alg:is} for implementation details and 
App. \ref{app:is} for calculation of the change of measure. 
\subsection{Interacting particle system (\IPS)}
\label{sec:enkf}
To solve the filtering problem \eqref{eq:dual-filter}, consider the following mean-field system inspired from the EnKF:
% Consider the mean field system 
\begin{subequations}\label{eq:meanfield}
    \begin{align}
    &\bar{\Y}_{1} = \bar{\Y}_0 - \bar{L}_0(\bar{\Y}_0 + \bar{W}_0), \; \\
    &\bar{\Y}_0 \stackrel{d}{=} Y_0, \quad  \bar{W}_0 \stackrel{d}{=} W_0
\end{align}
\end{subequations}
where $\bar{\Y}_0$ and $\bar{W}_0$ are independent, and the gain is
\begin{align}\label{eq:meanfieldgain}
    \bar{L}_0 = \cov{\bar{Y}_0}(\cov{\bar{Y}_0} + \cov{\bar{W}_0}).
\end{align}

\begin{proposition}\label{prop:meanfield}    
For the mean-field system \eqref{eq:meanfield} with gain \eqref{eq:meanfieldgain}, 
\begin{align*}
    \P(\bar{\Y}_{1} \in \cdot) = \rho\optnon(\cdot).
\end{align*}
% Then $\bar{\Y}_{1} \sim \normal(x_0 + u\optada,2\id_d)$.
\end{proposition}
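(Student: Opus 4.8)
The plan is to carry out an explicit Gaussian computation, using Prop.~\ref{prop:duality} to identify the target measure $\rho\optnon$. First I would evaluate the gain: since $\bar{\Y}_0 \stackrel{d}{=} Y_0 \sim \normal(x_0,\id_d)$ and $\bar{W}_0 \sim \normal(\zero,\id_d)$ in the dual model~\eqref{eq:dual-filter}, we have $\cov{\bar{Y}_0} = \cov{\bar{W}_0} = \id_d$, so the gain~\eqref{eq:meanfieldgain} evaluates to $\bar{L}_0 = \frac{1}{2}\id_d$. Substituting into~\eqref{eq:meanfield} gives $\bar{\Y}_1 = \bar{\Y}_0 - \frac{1}{2}(\bar{\Y}_0 + \bar{W}_0) = \frac{1}{2}(\bar{\Y}_0 - \bar{W}_0)$, an affine function of the independent Gaussians $\bar{\Y}_0$ and $\bar{W}_0$.

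Hence $\P(\bar{\Y}_1 \in \cdot)$ is Gaussian with mean $\frac{1}{2}x_0$ and covariance $\frac{1}{4}\id_d + \frac{1}{4}\id_d = \frac{1}{2}\id_d$, i.e.\ $\normal(\frac{x_0}{2},\frac{1}{2}\id_d)$. It then remains to check that this equals $\rho\optnon$. By Prop.~\ref{prop:duality}, $\rho\optnon = \post(\cdot\mid\zero)$, the posterior of~\eqref{eq:dual-filter} at $z=\zero$; for the $\normal(x_0,\id_d)$ prior and the observation $Z_1 = Y_0 + W_1$ with $W_1 \sim \normal(\zero,\id_d)$, the standard Bayes update yields a Gaussian posterior with mean $x_0 + L_0(z-x_0)$ and covariance $(\id_d - L_0)\id_d$ with $L_0 = \frac{1}{2}\id_d$, which at $z=\zero$ is exactly $\normal(\frac{x_0}{2},\frac{1}{2}\id_d)$. (Equivalently, Table~\ref{tb:duality} records $\rho\optnon \propto \exp(-|x-\frac{x_0}{2}|^2)$, the same Gaussian.) Comparing the two, $\P(\bar{\Y}_1 \in \cdot) = \rho\optnon(\cdot)$, which is the claim.

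The computation is elementary, so the real content is the assertion that the recursion~\eqref{eq:meanfield}--\eqref{eq:meanfieldgain} is an \emph{exact} transport from prior to posterior, not merely one that matches the first two moments: this holds because the update map $y \mapsto y - \bar{L}_0(y+w)$ is affine, the prior is Gaussian, and the Kalman choice of $\bar{L}_0$ aligns the pushforward with the Bayes posterior, so no higher moments can be disturbed. The step I would be most careful with is precisely this Gaussianity/affineness bookkeeping, together with keeping the observation value pinned at $z=\zero$, which is what forces the mean to $x_0/2$. A slightly more general and arguably cleaner route would be to first record, as a lemma, that for any Gaussian prior and linear Gaussian likelihood the mean-field EnKF map pushes the prior forward exactly onto the posterior, and then deduce Prop.~\ref{prop:meanfield} by combining that lemma with Prop.~\ref{prop:duality}; I would nonetheless present the direct version above to keep the single-stage discussion self-contained.
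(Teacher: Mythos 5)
Your proof is correct and follows essentially the same route as the paper's: evaluate the gain as $\tfrac{1}{2}\id_d$, observe that $\bar{\Y}_1$ is an affine function of independent Gaussians and hence $\normal(\tfrac{x_0}{2},\tfrac{1}{2}\id_d)$, and match this against $\rho\optnon=\post(\cdot\mid\zero)=\normal(\tfrac{x_0}{2},\tfrac{1}{2}\id_d)$ from the duality proposition. Your added remark that exactness (not just moment matching) follows from the affine-Gaussian structure is a nice clarification, but it is not a different argument.
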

\begin{proof}
    See App. \ref{app:meanfield}.
\end{proof}

%\textbf{Method for control:}
% Using the discrete-time EnKF 
To implement the mean field system
%described in Algorithm \ref{alg:1},
we use a \IPS:
\begin{align*}
\{ (\Y_0^i, \Y_1^i) \in \R^d \times \R^d :  1 \le i \le N \}.
\end{align*}

The equations for the \IPS:
%\begin{subequations}\label{eq:onestep-dt-enkf}
\begin{gather*}
\Y_{1}^i = \Y_0^i - L_0^{(N)}(\Y_0^i + W_0^i), \; \\
\{\Y_0^i\}_{i=1}^{N} \iid\normal(x_0,\id_d), \quad  \{W_0^i\}_{i=1}^{N} \iid\normal(\zero,\id_d),
\end{gather*}
with the empirical approximation for the gain
\begin{align*}
&L_0^{(N)} := \Sigma^{(N)}_0(\Sigma^{(N)}_0 + \id_d)\inv % \\ %\label{eq:LtN}\\
%&Z_t = x_0 + W_t, , \\
\end{align*}
and $\hat{\Y}_0^{(N)} := \frac{1}{N}\sum_{i=1}^{N}\Y_0^i$,
\begin{align*}
&\Sigma^{(N)}_0 := \frac{1}{N-1}\sum_{i=1}^{N}(\Y_0^i - \hat{\Y}_0^{(N)})(\Y_0^i - \hat{\Y}_0^{(N)})\tp % \\ % \label{eq:SigmatN}\\
% &\hat{\Y}_0^{(N)} := \frac{1}{N}\sum_{i=1}^{N}\Y_0^i % \label{eq:XhattN}
\end{align*}
are respectively the empirical mean and covariance.
%\end{subequations}
%In this case,
%\begin{align*}
%\E{f(X_0)|Z_1} \approx \frac{1}{N}\sum_{i=1}^{N} f(\Y_1^i).
%\end{align*}
% which help us to approximate
% \begin{align*}
% \rho\optnon(\cdot) \approx \frac{1}{N}\sum_{i=1}^{N} \one_{\{\Y_1^i \, \in \, \cdot \, \}}.
% \end{align*}
Then we use Rmk. \ref{rmk:duailty} to approximate the optimal control
% \begin{align}\label{eq:oc-ips}
%  u\optada \approx (u\optada_{\text{\IPS}})^{(N)} := \frac{1}{N}\sum_{i=1}^{N} \Y_1^i - x_0.
% \end{align}
\boxedeq{eq:oc-ips}{
 u\optada \approx (u\optada_{\text{\IPS}})^{(N)} := \frac{1}{N}\sum_{i=1}^{N} \Y_1^i - x_0.
}

% See algorithm \ref{alg:cips} for implementation details.
% \begingroup
% \begin{algorithm}
% \caption{Interacting particle algorithm for control}
% \label{alg:cips}
% \begin{algorithmic}[1]
% \renewcommand{\arraystretch}{5}
% %\REQUIRE $T,N$,$Q$, simulator of $\Lambda X + \sqrt{\Sigma_1}(U + W) $, 
% %\REQUIRE 
% \STATE Sample $\{\Y_0^i\}_{i=1}^{N} \iid\normal(x_0,\id_d)$, \vspace{3pt}
% \STATE $\hat{\Y}_0^{(N)} := \frac{1}{N}\sum_{i=1}^{N}\Y_0^i$, \vspace{3pt}
% \STATE $\Sigma^{(N)}_0 := \frac{1}{N-1}\sum_{i=1}^{N}(\Y_0^i - \hat{\Y}_0^{(N)})(\Y_0^i - \hat{\Y}_0^{(N)})\tp$ \vspace{3pt}
% \STATE $L_0^{(N)} := \Sigma^{(N)}_0(\Sigma^{(N)}_0 + \id_d)\inv$
% %\STATE Sample $\{\W_0^i\}_{i=1}^{N} \iid\normal(\zero,\id_d),$
% %\FOR{$t = 0,1,2,\ldots,T-1$}
% \FOR{$i=1,2,\ldots,N$} \vspace{3pt}
% \STATE $W_t^i  \iid \normal(\zero,\id_d)$ \vspace{3pt}
% %\STATE $\Delta^i_t := \zero - (\Y_0^i + W_0^i)$ 
% %\STATE Compute $L_0^{(N)}$ using \eqref{eq:cips}.
% \STATE $\Y_{1}^i = \Y_0^i - L_0^{(N)}(\Y_0^i + W_0^i)$, 
% \ENDFOR
% %\ENDFOR
% \RETURN $(u\optada)^{(N)} := \frac{1}{N}\sum_{i=1}^{N}\Y_1^i - x_0$
% \end{algorithmic}
% \end{algorithm}
% \endgroup

\subsection{Scaling of error with dimensions}
\label{sec:comparison}

In the following proposition, we study how the importance sampling and interacting particle system methods behave with increasing dimension of state $d$. 

Define the mean square error in estimating the optimal control
\begin{align}\label{eq:mse}
    \mse_{*} :=  \E{|(u\optada_{*})^{(N)}-u\optada|^2},
\end{align}
where $*$ is MPPI or \IPS. 
\begin{proposition}\label{prop:onestep}
% The variance in estimating optimal control, denoted 
% %$\var{(u\optada)^{(N)} } := \E{|(u\optada)^{(N)}-u\optada|^2}$, 
% $\mathrm{var}^{(N)} := \E{|(u\optada)^{(N)}-u\optada|^2}$, 
% The following is true for the mean square error in estimating the optimal control:
Consider the SOCP \eqref{eq:DTOC1-simple}, and the MPPI method \eqref{eq:oc-mppi} and \IPS{} method \eqref{eq:oc-ips} to approximate the optimal control. The mean squared error defined in \eqref{eq:mse}, is given as
\begin{enumerate}
\item For the MPPI algorithm:
\begin{align*}
&\mse_{\text{MPPI}} = \frac{1}{N}\left(\sqrt{\frac{4}{3}} \right)^{d} \left( \left|\frac{\bar{u} - x_0}{3}\ \right|^2 + \frac{d}{3} \right) \\
& \; \exp\left( \frac{4|\bar{u}|^2 + 7 |x_0|^2 + |x_0 + u|^2}{6} \right)  -  \frac{|x_0|^2}{4N} 
\end{align*}
\item For interacting particle system (\IPS) algorithm:
\begin{align*}
\mse_{\text{\IPS}} \le \frac{2d}{N} + \frac{5}{4}|x_0|^2
\end{align*}
\end{enumerate}
\end{proposition}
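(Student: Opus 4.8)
The plan is to analyze the two estimators separately; in both cases I will use that, by Prop.~\ref{prop:control} together with the completion-of-squares formula $u\optada=-\frac{x_0}{2}$, one has $\int_{\R^d}x\,\rho\optnon(\ud x)=u\optada+x_0=\frac{x_0}{2}$, so that the error of each empirical estimate is its deviation from $\frac{x_0}{2}$, translated by $-x_0$.

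\textbf{MPPI.} Set $g(x):=\exp(-\half|x+\bar u|^2)$ and let $Z:=\int_{\R^d}g(x)\,\rho^{\bar u}(\ud x)$ be the deterministic normalizing constant, so that the weight in~\eqref{eq:eta} is $\eta(x;\bar u)=g(x)/Z$. For the single-stage model $Z$ is available in closed form, and the empirical normalizing sum $\frac1N\sum_j g(X_1^j)$ concentrates around it; hence, to leading order in $1/N$, the MPPI estimate equals the empirical mean of $N$ i.i.d.\ copies of $\frac1Z g(X_1)X_1-x_0$, which by the importance-sampling identity~\eqref{eq:iscontrol} is centred exactly at $u\optada$. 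Therefore
\[
\mse_{\text{MPPI}}=\frac1N\left(\frac1{Z^{2}}\int_{\R^d}g(x)^{2}|x|^{2}\,\rho^{\bar u}(\ud x)-\Big|\frac{x_0}{2}\Big|^{2}\right),
\]
which already exposes the $-|x_0|^2/(4N)$ summand as $-\frac1N\big|\int x\,\rho\optnon(\ud x)\big|^2$. What remains is to evaluate two Gaussian integrals against $\rho^{\bar u}=\normal(x_0+\bar u,\id_d)$ by completing the square: (i) $Z=\int\frac1{(2\pi)^{d/2}}e^{-\frac12|x-x_0-\bar u|^{2}}e^{-\frac12|x+\bar u|^{2}}\,\ud x$, whose integrand is an unnormalized Gaussian, so $Z=2^{-d/2}$ times a scalar exponential; and (ii) $\int g(x)^2|x|^2\,\rho^{\bar u}(\ud x)$, whose exponential factor, after completing the square, is proportional to the density of $\normal\!\big(\frac{x_0-\bar u}{3},\frac13\id_d\big)$ --- note the $\frac13\id_d$ covariance --- so that integral equals $3^{-d/2}$ times $\big(\big|\frac{x_0-\bar u}{3}\big|^{2}+\frac d3\big)$ times a scalar exponential. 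Multiplying, the prefactors combine as $Z^{-2}\cdot3^{-d/2}=2^{d}\,3^{-d/2}=(\sqrt{4/3}\,)^{d}$, the second moment of the localized Gaussian supplies the $\big(\big|\frac{\bar u-x_0}{3}\big|^{2}+\frac d3\big)$ factor, and collecting the scalar exponentials yields the stated formula; the $(\sqrt{4/3}\,)^{d}$ factor is the curse of dimensionality of importance sampling.

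\textbf{\IPS.} Here $\bar L_0=\cov{\bar Y_0}(\cov{\bar Y_0}+\cov{\bar W_0})^{-1}=\half\id_d$, and writing $L:=L_0^{(N)}=\Sigma^{(N)}_0(\Sigma^{(N)}_0+\id_d)^{-1}$, $\hat\Y_0:=\hat\Y_0^{(N)}$, $\hat W_0:=\frac1N\sum_i W_0^i$ and $\xi:=\hat\Y_0-x_0$, one has
\[
(u\optada_{\text{\IPS}})^{(N)}-u\optada=(\id_d-L)x_0-\frac{x_0}{2}+(\id_d-L)\xi-L\hat W_0 .
\]
The structural facts are: $\hat W_0\sim\normal(\zero,\frac1N\id_d)$ is independent of everything and mean zero; for i.i.d.\ Gaussian samples the sample mean and sample covariance are independent, so $\xi$ (mean zero) is independent of $L$; and $L$ and $\id_d-L=(\Sigma^{(N)}_0+\id_d)^{-1}$ are symmetric positive semidefinite with eigenvalues in $[0,1]$. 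Taking the expected squared Euclidean norm, every cross term carries a mean-zero factor independent of its partner (either $\hat W_0$, or, after conditioning on $L$, the term $\xi$), so the three squared-norm contributions survive:
\[
\mse_{\text{\IPS}}=\E{\big|(\id_d-L)x_0-\textstyle\frac{x_0}{2}\big|^{2}}+\frac1N\,\E{\mathrm{tr}\big((\id_d-L)(\id_d-L)\tp\big)}+\frac1N\,\E{\mathrm{tr}\big(LL\tp\big)} .
\]
For the first term, $x_0\tp(\id_d-L)x_0\ge0$ together with $\|\id_d-L\|\le1$ gives $\big|(\id_d-L)x_0-\frac{x_0}{2}\big|^{2}=|(\id_d-L)x_0|^{2}-x_0\tp(\id_d-L)x_0+\frac14|x_0|^{2}\le\frac54|x_0|^{2}$; each trace is $\le d$ since the eigenvalues lie in $[0,1]$. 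Summing yields $\mse_{\text{\IPS}}\le\frac{2d}{N}+\frac54|x_0|^{2}$.

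\textbf{Main obstacle.} For MPPI the only conceptual step is the reduction of the self-normalizing ratio estimator to an i.i.d.\ empirical mean at leading order in $1/N$ (legitimate because the single-stage model makes $Z$ explicit and the normalizing sum concentrates around it); everything after is Gaussian-integral bookkeeping, the sole care being the honest tracking of the exponential-in-$d$ prefactors, which is where the scaling law is born. For \IPS the key enabler is the Gaussian independence of the sample mean and the sample covariance, which decouples the $O(|x_0|^2)$ bias (from $L_0^{(N)}\ne\bar L_0$) from the $O(d/N)$ sampling fluctuations; the random, nonlinear gain $L_0^{(N)}$ then never has to be controlled beyond the order bounds $0\preceq L_0^{(N)}\preceq\id_d$, which is exactly what keeps the estimate elementary and its error affine in $d$.
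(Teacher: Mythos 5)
Your proposal is correct and follows essentially the same route as the paper: for MPPI, replacing the self-normalizing denominator by its deterministic expectation to reduce the estimator to an i.i.d.\ mean and then evaluating the two Gaussian integrals (producing the $Z^{-2}3^{-d/2}=(\sqrt{4/3})^{d}$ prefactor and the $\bigl|\frac{\bar u - x_0}{3}\bigr|^{2}+\frac d3$ second moment); for \IPS, decomposing the error into the deterministic bias $(\id_d+\Sigma_0^{(N)})^{-1}x_0-\frac{x_0}{2}$ plus two fluctuation terms and bounding each via the spectral bounds $0\preceq L_0^{(N)}\preceq \id_d$ and the $\frac54$ lemma. If anything, your \IPS{} bookkeeping is slightly more careful than the paper's (you explicitly center $\hat\Y_0$ at $x_0$ and invoke the independence of the Gaussian sample mean and sample covariance to kill the cross terms, a step the paper glosses over), but the decomposition and the resulting bounds are identical.
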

\vspace{5pt}
\begin{proof}
See App. \ref{app:onestep}.
\end{proof}

\section{Numerical simulation}
\label{sec:sim}

In the numerical simulation, we apprximate the optimal control for \eqref{eq:DTOC1-simple} using the MPPI method \eqref{eq:oc-mppi} and \IPS{} method \eqref{eq:oc-ips}.
In figure \ref{fig:1} we display results of numerical simulations demonstrating the trends in Prop. \ref{prop:onestep}, for $x_0 = 0$. The MPPI algorithm samples from the uncontrolled distribution, that is, $\bar{u} = 0$. 

It is observed that the m.s.e. in MPPI algorithm grows as $O(d(1.33)^{\frac{d}{2}})$ and in \IPS{} algorithm grows as $O(d)$, which is both consistent with Prop. \ref{prop:onestep}. Thus we can see the the curse of dimensionality in the MPPI algorithm, which is avoided by the \IPS{} algorithm.

The scaling in $d$ is investigated for various values of $N$, namely $N=4\times 10^3, 6\times 10^3, 10^4,1.5 \times 10^4, 2\times 10^4$. To calculate the m.s.e. the results are averaged over 1000 independent runs for calculating the expectation. See algorithm \ref{alg:is} and \ref{alg:cips}  for implementation details of $u\optada_{\text{MPPI}}$ and $u\optada_{\text{\IPS}}$ respectively.

\begin{algorithm}
\caption{MPPI algorithm for control \cite[Alg. 1]{williams-2018}}
\label{alg:is}
\begin{algorithmic}[1]
%\REQUIRE $T,N$,$Q$, simulator of $\Lambda X + \sqrt{\Sigma_1}(U + W) $, 
%\REQUIRE 
\STATE Sample $\{V_0^i\}_{i=1}^{N} \iid \normal(\zero,\id_d)$
%\FOR{$t = 0,1,2,\ldots,T-1$}
\FOR{$i=1,2,\ldots,N$}
\STATE $X_1^i := x_0 + \bar{u} + V_0^i$
\STATE $\tilde{\eta}^i :=\exp\left( -\half \left| X_1^i + \bar{u}\right|^2 \right)$
\ENDFOR
%\ENDFOR
\FOR{$i=1,2,\ldots,N$}
\STATE ${\eta}^i :=\eta^i(\frac{1}{N}\sum_{i=1}^N \tilde{\eta}^i)\inv$
\ENDFOR
\RETURN $(u\optada_{\text{MPPI}})^{(N)} :=\frac{1}{N}\sum_{i=1}^{N}\eta^iX_1^i - x_0$
\end{algorithmic}
\end{algorithm}

\begingroup
\begin{algorithm}
\caption{Interacting particle algorithm for control}
\label{alg:cips}
\begin{algorithmic}[1]
\renewcommand{\arraystretch}{5}
%\REQUIRE $T,N$,$Q$, simulator of $\Lambda X + \sqrt{\Sigma_1}(U + W) $, 
%\REQUIRE 
\STATE Sample $\{\Y_0^i\}_{i=1}^{N} \iid\normal(x_0,\id_d)$, \vspace{3pt}
\STATE $\hat{\Y}_0^{(N)} := \frac{1}{N}\sum_{i=1}^{N}\Y_0^i$, \vspace{3pt}
\STATE $\Sigma^{(N)}_0 := \frac{1}{N-1}\sum_{i=1}^{N}(\Y_0^i - \hat{\Y}_0^{(N)})(\Y_0^i - \hat{\Y}_0^{(N)})\tp$ \vspace{3pt}
\STATE $L_0^{(N)} := \Sigma^{(N)}_0(\Sigma^{(N)}_0 + \id_d)\inv$ \vspace{3pt}
%\STATE Sample $\{\W_0^i\}_{i=1}^{N} \iid\normal(\zero,\id_d),$
%\FOR{$t = 0,1,2,\ldots,T-1$}
\STATE  $\{W_0^i\}_{i=1}^{N} \iid \normal(\zero,\id_d)$ \vspace{3pt}
\FOR{$i=1,2,\ldots,N$} \vspace{3pt}
%\STATE $\Delta^i_t := \zero - (\Y_0^i + W_0^i)$ 
%\STATE Compute $L_0^{(N)}$ using \eqref{eq:cips}.
\STATE $\Y_{1}^i = \Y_0^i - L_0^{(N)}(\Y_0^i + W_0^i)$, 
\ENDFOR
%\ENDFOR
\RETURN $(u\optada_{\text{\IPS}})^{(N)} := \frac{1}{N}\sum_{i=1}^{N}\Y_1^i - x_0$
\end{algorithmic}
\end{algorithm}
\endgroup

% \begin{figure*}
% \centering
% \includegraphics[scale=0.75]{Figures/results-onestep.pdf}
% \caption{Numerical illustration of Prop. \ref{prop:onestep}.}
% \label{fig:1}
% \end{figure*}

\section{Conclusion and future work}
\label{sec:conc}
In this paper, we revisited the formulation of control problems as the Gibbs variational problem, in order to study how performance of two numerical algorithms namely, MPPI and interacting particle systems (IPS) scale with dimension of state. As future work, we will look at the case of multistep optimal contol with generic coefficients, and nonlinear state cost and state dynamics.

\bibliography{references}
\bibliographystyle{IEEEtran}

\appendices

\section{Gibbs variational problem}
\label{app:gibbs}
Define 
\begin{align*}%\label{eq:free-energy}
\F := -\log \int_{\R^d}\exp(-\lag(x)) \gamma(\ud x) = -\log(\gamma(e^{-c})).
\end{align*}
\begin{lemma}
$\F$ is the minimum of \eqref{eq:KLcost} and the optimal measure which achieves this is \eqref{eq:optimal-measure} given by \eqref{eq:optimal-measure} recalled below:
\begin{align*}
%\mu\opt(\ud x) := \frac{\exp(-\lag(x))\gamma(\ud x)}{\int_{\R^d}\exp(-\lag(x')) \gamma(\ud x') }.
\mu\opt(\ud x) := \frac{\exp(-\lag(x))\gamma(\ud x)}{\gamma(e^{-c}) }, \quad x\in\Re^d.
\end{align*}
\end{lemma}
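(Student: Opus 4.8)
The plan is to prove this classical fact — the Gibbs (Donsker--Varadhan) variational formula — by a ``completion of the square'' performed at the level of relative entropy. The engine of the argument is the identity
\begin{align}\label{eq:gvp-completion}
\Div{\mu}{\gamma} + \mu(c) \;=\; \Div{\mu}{\mu\opt} + \F,
\end{align}
valid for every probability measure $\mu \ll \gamma$ with $\Div{\mu}{\gamma}<\infty$ and $\mu(c)<\infty$. Granting \eqref{eq:gvp-completion}, the lemma follows at once: relative entropy is nonnegative and equals zero precisely when its two arguments coincide, so the right-hand side of \eqref{eq:gvp-completion} is at least $\F$, with equality if and only if $\mu = \mu\opt$; moreover $\F$ is finite, since $0 < e^{-c}\le 1$ forces $0<\gamma(e^{-c})\le 1$, so the remaining cases $\Div{\mu}{\gamma}=\infty$ or $\mu(c)=\infty$ make the objective in \eqref{eq:KLcost} trivially $\ge\F$. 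This simultaneously identifies the minimum value as $\F$ and the unique minimizer as $\mu\opt$, which is the measure given by \eqref{eq:optimal-measure}.

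To establish \eqref{eq:gvp-completion} I would proceed as follows. First, check that $\mu\opt$ is a genuine probability measure and that $\gamma$ and $\mu\opt$ are mutually absolutely continuous: its density $\frac{\ud\mu\opt}{\ud\gamma} = e^{-c}/\gamma(e^{-c})$ is strictly positive everywhere because $c$ is real-valued ($c:\R^d\to[0,\infty)$). Consequently $\mu\ll\gamma$ implies $\mu\ll\mu\opt$, and the chain rule for Radon--Nikodym derivatives gives $\frac{\ud\mu}{\ud\mu\opt} = \frac{\ud\mu}{\ud\gamma}\cdot\frac{\gamma(e^{-c})}{e^{-c}}$ $\mu$-almost everywhere. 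Taking logarithms, splitting the integral, and using $\log\gamma(e^{-c}) = -\F$ then yields
\begin{align*}
\Div{\mu}{\mu\opt} \;=\; \int_{\R^d}\log\tfrac{\ud\mu}{\ud\gamma}\,\ud\mu \;+\; \int_{\R^d} c\,\ud\mu \;-\; \F \;=\; \Div{\mu}{\gamma} + \mu(c) - \F,
\end{align*}
which rearranges to \eqref{eq:gvp-completion}. Alongside this I would dispatch the integrability bookkeeping needed to make the splitting legitimate: $\F<\infty$ as above; $\mu\opt(c) = \gamma(c\,e^{-c})/\gamma(e^{-c})<\infty$ because $t\mapsto t e^{-t}$ is bounded on $[0,\infty)$ (so the infimum is indeed attained with a finite value); and hence $\Div{\mu\opt}{\gamma} = \F - \mu\opt(c) < \infty$ as well.

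I do not foresee a serious obstacle here. The only points demanding care are (i) justifying that $\mu\ll\gamma$ upgrades to $\mu\ll\mu\opt$ and that the density factorizes as claimed — this rests entirely on $e^{-c}>0$ — and (ii) confirming that every term appearing in the split of $\Div{\mu}{\mu\opt}$ is finite, so that the rearrangement into \eqref{eq:gvp-completion} is an honest equality rather than an indeterminate form; both are guaranteed by $c\ge 0$ together with the standing hypothesis $\gamma(c)<\infty$. For completeness one may also point to a standard reference (e.g.\ Dupuis--Ellis or Donsker--Varadhan) where this variational representation is proved in greater generality.
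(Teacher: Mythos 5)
Your proof is correct, but it follows a genuinely different route from the paper. The paper's argument is a one-sided bound: after the change of measure $\gamma(e^{-c}) = \int e^{-c}\,\frac{\ud\gamma}{\ud\mu}\,\ud\mu$, Jensen's inequality applied to $-\log$ gives $\F \le \Div{\mu}{\gamma} + \mu(c)$ for all $\mu\ll\gamma$, and then one checks by substitution that $\mu\opt$ attains equality. You instead prove the exact entropy decomposition $\Div{\mu}{\gamma} + \mu(c) = \Div{\mu}{\mu\opt} + \F$ via the chain rule for Radon--Nikodym derivatives, and conclude from nonnegativity of relative entropy. The two arguments are of comparable length, but yours buys something the paper's does not state: since $\Div{\mu}{\mu\opt}=0$ iff $\mu=\mu\opt$, you get \emph{uniqueness} of the minimizer for free, whereas extracting uniqueness from the Jensen route would require invoking the equality case of Jensen for the strictly convex function $-\log$. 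Your attention to the integrability bookkeeping ($e^{-c}>0$ so $\mu\ll\gamma$ upgrades to $\mu\ll\mu\opt$; $0<\gamma(e^{-c})\le 1$ so $\F$ is finite; $\mu\opt(c)<\infty$ since $t\mapsto te^{-t}$ is bounded) is more careful than the paper's, which elides these points. Both proofs are standard and correct; no gap in yours.
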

\vspace{10pt}
\begin{proof}
Let $\mu \ll \gamma$ be arbitrary. Then using change of measure and Jensen inequality,
\begin{align}
\F = -\log \int_{\R^d}\exp(-\lag(x)) \frac{\ud \gamma}{\ud \mu}(x) \, \mu(\ud x) \nonumber \\
\le \int_{\R^d}\left(\lag(x) + \log \frac{\ud \mu}{\ud \gamma}(x) \right)\, \mu(\ud x) \nonumber
\end{align}
Hence $\F$ is a lower bound for \eqref{eq:KLcost} and substituting the formula \eqref{eq:optimal-measure} for $\mu\opt$ produces an equality. 
\end{proof}
\begin{remark}
In statistical mechanics, $\F$ is known as the free energy.
\end{remark}

\section{Details of Sec. \ref{sec:control}}
\label{app:control}
% Define the notation 
% \begin{align*}
%     \rho^{u}(\cdot) := \P(X_1 \in \cdot \mid U = u), \forall u \in \UA.
% \end{align*}
% Recall that $\gamma := \rho^0$, the measure induced by zero control.  
For $u \in \UA$, $\rho^{u} =\normal(x_0 + u,\id_d)$, that is,
\begin{align*}
\rho^0(\ud x) &= \frac{\ud x}{(\sqrt{2\pi})^d}\exp\left(-\half\left|x-x_0 \right|^2\right),  \\
\rho^{u}(\ud x) &= \frac{\ud x}{(\sqrt{2\pi})^d}\exp\left(-\half\left|x - x_0 - u\right|^2\right).
\end{align*}
% Let $$.Recalling that $u\optada = -\frac{x_0}{2}$ we get
% \begin{align*}
% \rho\optada(\ud x) &= \frac{\ud x}{(\sqrt{2\pi})^d}\exp\left(-\half\left|x-\frac{x_0}{2} \right|^2\right),  
% \end{align*}
\subsection{Proof of Prop. \ref{prop:control-gibbs}}
\label{app:control-1}
To get the equivalence between \eqref{eq:KLcost} and \eqref{eq:DTOC1-simple} 
%(also see Table \ref{tb:duality-control}), %and $\Upsilon:= \{ \rho^{u} : U \in \R \}$.
the following result is helpful.
\begin{lemma}
For any $u \in \UA$,
\begin{align*}
&\log \frac{\ud \rho^{u}}{\ud \rho^0}(X_1) = V_1\tp u + \half |u|^2, \\
&\Div{\rho^{u}}{\rho^0} = \rho^u\left(\log \frac{\ud \rho^{u}}{\ud \gamma}(X_1) \right) = \half |u|^2.
\end{align*}
\end{lemma}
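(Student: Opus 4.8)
The plan is to compute everything directly from the explicit Gaussian densities recalled at the start of this appendix, namely $\rho^u = \normal(x_0+u,\id_d)$ and $\rho^0 = \normal(x_0,\id_d)$, both of which have full-support Lebesgue densities and are therefore mutually absolutely continuous.

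First I would form the Radon--Nikodym derivative as the ratio of these densities,
\[
\frac{\ud \rho^{u}}{\ud \rho^0}(x) = \exp\left( -\half|x - x_0 - u|^2 + \half |x - x_0|^2 \right),
\]
and expand the quadratics, so that the $|x-x_0|^2$ terms cancel and $\log \frac{\ud \rho^{u}}{\ud \rho^0}(x) = (x - x_0)\tp u - \half |u|^2$. Evaluating at the state $X_1 = x_0 + u + V_1$ generated under the control $U = u$, so that $X_1 - x_0 = u + V_1$, yields $\log \frac{\ud \rho^{u}}{\ud \rho^0}(X_1) = (u + V_1)\tp u - \half|u|^2 = V_1\tp u + \half|u|^2$, which is the first claimed identity.

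Next I would obtain the relative entropy by integrating this log-derivative against $\rho^u$, i.e.\ taking the conditional expectation given $U = u$:
\[
\Div{\rho^{u}}{\rho^0} = \E{\log \frac{\ud \rho^{u}}{\ud \rho^0}(X_1) \mid U = u} = \E{V_1\tp u + \half|u|^2} = \half|u|^2,
\]
using that $V_1 \sim \normal(\zero, \id_d)$ has zero mean and is independent of $u \in \UA$ (which may depend only on $x_0$). As a sanity check, the same value follows from the standard closed form $\Div{\normal(m_1,\Sigma)}{\normal(m_2,\Sigma)} = \half (m_1-m_2)\tp \Sigma\inv (m_1-m_2)$ with $m_1 - m_2 = u$ and $\Sigma = \id_d$.

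There is no substantive obstacle here; the computation is a routine completion of squares. The only point requiring care is the bookkeeping: making sure that in the expression $\log\frac{\ud\rho^{u}}{\ud\rho^0}(X_1)$ the random variable $X_1$ is the state under the control $U = u$ (so that $X_1 - x_0 = u + V_1$), and that the quadratic-in-$x$ terms are cancelled correctly before taking expectations.
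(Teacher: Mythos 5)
Your proposal is correct and follows essentially the same route as the paper's proof: compute the log Radon--Nikodym derivative from the ratio of Gaussian densities to get $(x-x_0)\tp u - \half|u|^2$, evaluate at $X_1 = x_0 + u + V_1$, and then take the expectation under $\rho^u$ using that $V_1$ has zero mean. The added sanity check via the closed-form KL divergence between Gaussians is a nice touch but not a different argument.
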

\begin{proof}
Using the formula for $\rho^u$ and $\rho^0$ we get 
\begin{align*}
&\log \frac{\ud \rho^{u}}{\ud \rho^0}(x) = (x - x_0)\tp u - \half |u|^2, 
\end{align*}
which when evaluated at $X_1 = x_0 + u + V_1$ gives the first equation. Since $V_1$ is independent of $u$ and has zero mean under $\rho^u$, it leads to the second equation. 
\end{proof}
Recall that $c(x) = \half |x|^2$. Then
\begin{align*}
    \rho^u(c) = \half\int|x|^2 \P(X_1 \in \ud x \mid U = u) = \half \E{|X_1|^2}.
\end{align*}
Hence,
% \begin{align*}
% \lag : \R^d \to \R :x \mapsto \half|x|^2,
% \end{align*}
% then we see that
\begin{align*}
\Div{\rho^{u}}{\rho^0} + \rho^{u}(\lag) &= \E{\half|X_1|^2} + \half |u|^2
\end{align*}
which establishes the equivalence between \eqref{eq:KLcost} and \eqref{eq:DTOC1-simple}.

%At this point, the question is whether the following equivalence hold (where we use the symbol $\stackrel{(?)}{=}$ to depict a conjecture):
%\begin{align}\label{eq:Gibbs-OCP}
%&\qquad \rho^{u_0\opt} \stackrel{(?)}{=} \Gibbs{\rho^0}{c},
%\end{align}
%which is understood as,
%\begin{align*}
%&\text{Adapted optimally controlled measure} \nonumber \\ &\quad \stackrel{(?)}{=}  \Gibbs{\text{Uncontrolled measure}}{\text{State cost}}. \nonumber
%\end{align*}
%The question is, will the measure on $X_1$ given by the optimal control $u_0\opt$  be the same as the optimal measure $\mu\opt$ for $\Gibbs{\rho^0}{c}$ (defined in \eqref{eq:optimal-measure})?  
%It would also directly yield that $\vf(0,x_0) = \F$. 
%This conjecture however is not true as shown now.
%
%Define $\UN$ to be the set of all possible controls, that is, the set of all controls $u_0 \in \R^d$ which may or may not be adapted to $x_0$. In particular, the controls may depend on the random variable $V_1$. We note that $\UA \subset \UN$ where the inclusion is strict. If we relax the optimal control problem \eqref{eq:DTOC1-simple} to let $u_0 \in \UN$, then a relaxed version of \eqref{eq:Gibbs-OCP} does hold as shown next. 
\subsection{Details of \eqref{eq:gibbs-measure-control}}
\label{app:control-2}
% \begin{lemma} 
% The following is true:
%     \begin{align*}
%     \rho\optnon(\ud x) &= \frac{\ud x}{(\sqrt{\pi})^d}\exp\left(-\left|x-\frac{x_0}{2} \right|^2\right).
%     \end{align*}
% \end{lemma}
% \vspace{5pt}
% \begin{proof}
Calculate the $\rho\optnon$ using \eqref{eq:optimal-measure}:
\begin{align*}
% \mu\opt(\ud x) &:= 
\rho\optnon(\ud x) &= \frac{\exp(-\lag(x))\rho^0(\ud x)}{\int_{\R^d}\exp(-\lag(x')) \rho^0(\ud x') } \\
&= \frac{\exp\left(-\half|x|^2 -\half\left|x-x_0 \right|^2\right)\ud x }{\int_{\R^d}\exp(-\half|x'|^2 -\half\left|x'-x_0 \right|^2)\ud x' } \\
&=  \frac{\ud x}{(\sqrt{\pi})^d}\exp\left(-\left|x - \frac{x_0}{2}\right|^2\right).
% = \rho\optnon(\ud x).
\end{align*}
    Thus $\rho\optnon = \normal(\frac{x_0}{2},\frac{1}{2}\id_d)$. Under $U\optnon$, $X_1 = \half x_0 + \frac{1}{\sqrt{2}}V_1$ hence $X_1 \sim \normal(\frac{x_0}{2},\frac{1}{2}\id_d)$.
% \end{proof}

Now it is clear that $\rho\optnon$ can never be equal to $\rho^u$ since the second moment of $\rho\optnon$ is different from that of$\rho^u$. As seen from $\rho^{u}$, the control can only adjust the mean of the distribution but not the covariance, since it is adapted to $x_0$. 
% Thus we conclude that $\rho\optnon$ is produced by a non-adapted control. In fact, $\rho\optnon$ is adapted to the state $X_1$. 
% Therefore, the following equivalence holds:
% \begin{align*}
% &\qquad \rho\optnon = \Gibbs{\rho^0}{c}.
% \end{align*}

%which is understood as,
%\begin{align*}
%&\text{Non-adapted optimally controlled measure} \nonumber \\ &\quad = \Gibbs{\text{Uncontrolled measure}}{\text{State cost}}. \nonumber
%\end{align*}
%Hence, $\F < \phi(0,x_0)$.
\subsection{Proof of Prop. \ref{prop:control} and Rmk. \ref{rmk:control-gibbs}}
\label{app:control-3}
To find the optimal control from $\rho\optnon$ we use the following result \cite{williams-2018}.
\begin{lemma}%\label{prop:DTOC1-simple}
The following is true for  \eqref{eq:DTOC1-simple}: 
\begin{enumerate}
%\item $u_0^* = \argmin_{U}\Div{\rho^{u_0^*}}{\rho^{u}}$
\item $u\optada = \argmin_{u}\Div{\rho\optnon}{\rho^{u}}$ \vspace{3pt}
% \item $\int x \, \ud \rho\optada(x) = \int x \, \ud \rho\optnon(x) = x_0 + u\optada$
\item $u\optada = \int x \, \ud \rho\optnon(x) - x_0 $
\end{enumerate}
%\todo[inline]{Does Assn 2.1, 2.2 imply Assn 1?}
\end{lemma}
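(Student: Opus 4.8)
The plan is to prove the two parts of the final lemma in turn, since part (2) is what feeds directly into Prop.~\ref{prop:control}. Throughout I would exploit the explicit Gaussian formulas already derived in App.~\ref{app:control-2}: $\rho\optnon = \normal(\tfrac{x_0}{2},\tfrac12\id_d)$ and $\rho^u = \normal(x_0+u,\id_d)$ for $u\in\UA$.

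For part (2), the cleanest route is a direct computation. I would write $\rho\optnon(\ud x)\propto\exp(-|x-\tfrac{x_0}{2}|^2)$ and note that its mean is $\int x\,\rho\optnon(\ud x) = \tfrac{x_0}{2}$; hence $\int x\,\rho\optnon(\ud x) - x_0 = -\tfrac{x_0}{2} = u\optada$, using the completion-of-squares value $u\optada=-\tfrac{x_0}{2}$ recalled in Sec.~\ref{sec:control}. That's essentially all of part (2). Alternatively, to connect with the variational characterization one can first establish part (1) and then extract part (2) from it, which I describe next.

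For part (1), I would compute $\Div{\rho\optnon}{\rho^u}$ as an explicit function of $u$ and minimize. Since both measures are Gaussian, the KL divergence has the closed form involving a trace term, a log-determinant term, and a quadratic mean-difference term $\tfrac12|m_{\rho\optnon}-m_{\rho^u}|^2$ weighted by $(\cov{\rho^u})\inv=\id_d$; only the last term depends on $u$. So $\argmin_u\Div{\rho\optnon}{\rho^u}$ reduces to $\argmin_u \tfrac12\big|\tfrac{x_0}{2}-(x_0+u)\big|^2$, whose minimizer is $u = \tfrac{x_0}{2}-x_0 = -\tfrac{x_0}{2}=u\optada$. Observing that this minimizing $u$ equals (mean of $\rho\optnon$) $-\,x_0$ gives part (2) as an immediate corollary, matching \eqref{eq:optimal-control}.

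The only real subtlety — and the step I would be most careful about — is the one already flagged in App.~\ref{app:control-2}: $\rho\optnon$ has covariance $\tfrac12\id_d$, which no admissible $\rho^u$ can match, so the minimum of $\Div{\rho\optnon}{\rho^u}$ over $\UA$ is strictly positive and the argmin is governed purely by mean-matching. I would make explicit that the covariance mismatch contributes only a $u$-independent constant to the KL functional, so the minimization is unaffected; this is what makes the mean-matching heuristic rigorous here rather than merely suggestive. Everything else is routine Gaussian algebra.
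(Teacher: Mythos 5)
Your proposal is correct and follows essentially the same route as the paper: the paper's proof in App.~\ref{app:control-3} likewise expands $\Div{\rho\optnon}{\rho^{u}}$ into a $u$-independent constant plus the squared mean-difference term $|x_0+u-\tfrac{x_0}{2}|^2$, minimizes over $u$ to get $u\optada=-\tfrac{x_0}{2}$, and reads off part (2) from the mean of $\rho\optnon=\normal(\tfrac{x_0}{2},\tfrac12\id_d)$. Your explicit remark that the covariance mismatch only contributes a constant is a helpful clarification of what the paper leaves implicit.
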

\begin{proof}
Observe that
\begin{align*}
%\Div{\Q^*}{\Q^{u_0}} = \half \left(\log(2) - \half + (x_0 + u - \frac{x_0}{2})^2   \right)
\Div{\rho\optnon}{\rho^{u}} = \left| x_0 + u - \frac{x_0}{2} \right|^2 + \{\text{const. w.r.t $u$}\}
\end{align*}
hence $u\optada = -\frac{x_0}{2} = \argmin_{u}\Div{\rho\optnon}{\rho^{u}}$. The part (2) can be seen from the expression of $\rho\optnon$.
% and $\rho\optada$. 
\end{proof}

\section{Proof of Prop. \ref{prop:filter}}
\label{app:filter}
Let $F_{Y_0|Z_1}(\cdot \mid z) := \P(X_0 \in \ud x\mid Z_1=z)$.
Using Bayes rule we have that
\begin{align*}%\label{eq:posterior}
F_{Y_0|Z_1}(\cdot \mid z) = \frac{\exp\left(-\half |z - h(x)|^2_{\RF\inv}\right)\gamma(\ud x)}{\int_{\R^d}\exp\left(-\half |z - h(x')|^2_{R\inv}\right)\gamma(\ud x')}
\end{align*}
Recalling the negative of log likelihood function $c(y;z) = \half|z -  h(y)|^2_{\RF\inv}$,
% \begin{align*}
% \lag : \R^m \to \R :x \mapsto \half|z -  h(x)|^2_{\RF\inv},
% \end{align*}
we see that the measure $\mu\opt$ in \eqref{eq:optimal-measure} for $\Gibbs{\gamma}{c(\cdot;z)}$ is the same as the measure $F_{Y_0|Z_1}(\cdot \mid z)$.
% $\post(\cdot|z)$. 
Therefore, we have the equivalence,
\begin{align*}
F_{X_0|Z_1}(\cdot \mid z)(\cdot,z) = \Gibbs{\gamma}{c(\cdot;z)}.
\end{align*}

\section{Proof of Prop. \ref{prop:duality}}
\label{app:duality}
Using Bayes' rule,
\begin{align*}
\post(\ud x) &= \frac{\exp\left(-\half|x|^2 -\half\left|x-x_0 \right|^2\right)\ud x }{\int_{\R^d}\exp(-\half|x'|^2 -\half\left|x'-x_0 \right|^2)\ud x' } \\
&=  \frac{\ud x}{(\sqrt{\pi})^d}\exp\left(-\left|x - \frac{x_0}{2}\right|^2\right) = \rho\optnon(\ud x).
\end{align*}

\section{Change of measure in MPPI}
\label{app:is}

\begin{lemma}
Let $\eta$ be as in \eqref{eq:eta}. Then \eqref{eq:iscontrol} holds.
%     \begin{align*}
% &\int_{\R^d} x \, \ud\rho\optnon(x) = \int_{\R^d} x\, \eta(x;\bar{u})  \, \ud \rho^{\bar{u}}(x) \\ 
%  &\eta(x;\bar{u}) := \frac{\exp\left( -\half \left| x + \bar{u}\right|^2 \right) }{\int_{\R^d}{\exp\left( -\half \left| x + \bar{u}\right|^2 \right)} \, \ud \rho^{\bar{u}}(x)}.
% \end{align*}
\end{lemma}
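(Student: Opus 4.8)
The plan is to reduce \eqref{eq:iscontrol} to the single identity $\frac{\ud \rho\optnon}{\ud\rho^{\bar{u}}}(x) = \eta(x;\bar{u})$, after which both equalities in \eqref{eq:iscontrol} are just instances of the change-of-measure formula. First I would record that $\rho\optnon \ll \rho^{\bar{u}}$: by \eqref{eq:Gibbs-control}--\eqref{eq:optimal-measure} we have $\rho\optnon \ll \rho^0$ with $\frac{\ud\rho\optnon}{\ud\rho^0}(x) = e^{-c(x)}/\rho^0(e^{-c})$, and $\rho^0,\rho^{\bar{u}}$ are mutually absolutely continuous since both equal $\normal(\cdot,\id_d)$ up to a shift of the mean. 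Hence by the chain rule $\frac{\ud \rho\optnon}{\ud\rho^{\bar{u}}}(x) = \frac{\ud\rho\optnon}{\ud\rho^0}(x)\,\frac{\ud\rho^0}{\ud\rho^{\bar{u}}}(x)$.

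Next I would substitute the two factors. For the second factor I would invoke the Lemma in App.~\ref{app:control-1} with $u = \bar{u}$, which gives $\frac{\ud\rho^{\bar{u}}}{\ud\rho^0}(x) = \exp((x-x_0)\tp\bar{u} - \half|\bar{u}|^2)$, hence $\frac{\ud\rho^0}{\ud\rho^{\bar{u}}}(x) = \exp(-(x-x_0)\tp\bar{u} + \half|\bar{u}|^2)$. Multiplying by $e^{-c(x)}/\rho^0(e^{-c}) = e^{-\half|x|^2}/\rho^0(e^{-c})$ and completing the square in the exponent, one finds $\frac{\ud \rho\optnon}{\ud\rho^{\bar{u}}}(x) = \kappa\,e^{-\half|x+\bar{u}|^2}$ for a constant $\kappa>0$ depending only on $x_0$ and $\bar{u}$. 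Reading \eqref{eq:eta} off directly, the weight $\eta(x;\bar{u})$ is likewise of the form $\kappa'\,e^{-\half|x+\bar{u}|^2}$ for a constant $\kappa'>0$.

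To conclude I would pin down $\kappa=\kappa'$ by normalization rather than by evaluating any Gaussian integral: the map $x\mapsto\frac{\ud\rho\optnon}{\ud\rho^{\bar{u}}}(x)$ integrates to $1$ against $\rho^{\bar{u}}$ because $\rho\optnon$ is a probability measure, and $x\mapsto\eta(x;\bar{u})$ integrates to $1$ against $\rho^{\bar{u}}$ because the denominator in \eqref{eq:eta} is exactly $\int_{\R^d} e^{-\half|x'+\bar{u}|^2}\,\ud\rho^{\bar{u}}(x')$. Two proportional nonnegative functions with the same $\rho^{\bar{u}}$-integral agree $\rho^{\bar{u}}$-a.e., so $\frac{\ud\rho\optnon}{\ud\rho^{\bar{u}}} = \eta(\cdot;\bar{u})$, and substituting this into $\int x\,\ud\rho\optnon(x) = \int x\,\frac{\ud\rho\optnon}{\ud\rho^{\bar{u}}}(x)\,\ud\rho^{\bar{u}}(x)$ yields \eqref{eq:iscontrol}. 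There is no genuine obstacle here; the only thing to watch is the bookkeeping of the Gaussian normalizing constants, and the normalization argument is precisely what lets me avoid it — alternatively one can compute the denominator of \eqref{eq:eta} as an explicit Gaussian integral and verify $\kappa=\kappa'$ by hand.
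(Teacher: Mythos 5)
Your proof is correct and follows essentially the same route as the paper: both factor $\frac{\ud \rho\optnon}{\ud \rho^{\bar{u}}}$ through the prior via the chain rule $\frac{\ud \rho\optnon}{\ud \rho^0}\cdot\frac{\ud \rho^0}{\ud \rho^{\bar{u}}}$ and identify the product with $\eta(\cdot;\bar{u})$. Your normalization argument for matching the constants (both densities integrate to one against $\rho^{\bar{u}}$) is a clean way to sidestep the Gaussian bookkeeping that the paper leaves implicit, but it is a refinement of detail, not a different method.
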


\hspace{12pt}\textit{Proof:}
% \begin{proof}
\begin{align*}
&\int_{\R^d} x \, \ud\rho\optnon(x)= \int_{\R^d} x \left(\frac{\ud \rho\optnon}{\ud \gamma} \frac{\ud \gamma}{\ud \rho^{\bar{u}}} \right) \, \ud \rho^{\bar{u}}(x).  
% \\  \noalign{\vskip5pt}
% &= \E[X_1 \sim \rho^{\bar{u}}]{\eta(\bar{u},X_1) X_1 } \\  \noalign{\vskip5pt}
% \eta(\bar{u},X_1) &:= \frac{\exp\left( -\half \left| X_1 + \bar{u}\right|^2 \right) }{\E[X_1\sim\rho^{\bar{u}}]{\exp\left( -\half \left| X_1 + \bar{u}\right|^2 \right)}}
\end{align*}
Now we use the following expressions to get the result:
\begin{align*}
&\frac{\ud \rho\optnon}{\ud \gamma}(x) = \frac{\exp\left(-\half |x|^2 \right)}{\int{\exp\left(-\half |x'|^2 \right)}\gamma(\ud x')}, \\ 
&\frac{\ud\gamma}{\ud \rho^{\bar{u}}}(x) = \frac{\exp\left(\half|\bar{u}|^2 - (x - x_0)\tp\bar{u}\right)}{\int_{\R^d}{\exp\left(\half|\bar{u}|^2 - (x' - x_0)\tp\bar{u}\right)}\rho^{\bar{u}}(\ud x')}. \; \; \rule{0.6em}{0.6em}
% \hfill \qquad \quad \rule{0.6em}{0.6em}\\
\end{align*}

\section{Proof of Prop. \ref{prop:meanfield}}
\label{app:meanfield}
Clearly, since $\cov{\bar{\Y}_0} = \cov{\bar{W}}_0 = \id_d$ we have $\bar{L}_0 = \half\id_d$. 
Hence $\bar{\Y}_1 = \half\bar{\Y}_0 + \half\bar{W}_0$. Since $\bar{\Y}_0$ and $\bar{W}_0$ are independent,  
% which gives that $\bar{\Y}_{1} \sim \normal(x_0 + u\optada,\half\id_d)$ since $x_0 + u\optada = \frac{x_0}{2}$.
% which gives that 
% $\bar{\Y}_{1} \sim \post(\cdot \mid \zero)$.
% where $\post$ is the posterior of \eqref{eq:dual-filter} whose formula is in \eqref{eq:}. 
% Using Prop. \ref{prop:duality} completes the proof.
$\bar{\Y}_{1}\sim \normal(\half x_0,\half\id_d)$. Moreover, from proof of Prop. \ref{prop:duality}, $\post = \rho\optnon = \normal(\half x_0,\half\id_d)$.

\section{Proof of Prop. \ref{prop:onestep}}
\label{app:onestep}
Let  $\E[\xi \sim \mu]{f(\xi)} := \int f(x) \ud\mu(x)$ denote expectation.
To aid analysis, inspired from \cite{amir-cips}, we make the approximation,
\begin{align}
\label{eq:approximation}
\eta^i := \frac{\exp\left( -\half \left| X_1^i + \bar{u}\right|^2 \right)}{\E[X_1^i \sim \rho^{\bar{u}}]{\exp\left( -\half \left| X_1^i + \bar{u}\right|^2 \right)}}.
\end{align}
\noindent\textbf{MPPI:}
We define the quantities 
\begin{align*}
\hat{X} &:= \frac{1}{N}\sum_{i=1}^{N} \eta^i X_1^i, \quad X_1^i \stackrel{\mathrm{i.i.d}}{\sim} \normal(x_0 + \bar{u},\id_d) \\
\eta^i &:= \frac{\tilde{\eta}^i}{\frac{1}{N}\sum_{i=1}^N \tilde{\eta}^i}, \quad
\tilde{\eta}^i := \exp\left( -\half \left| X_1^i + \bar{u}\right|^2 \right) 
\end{align*}
% Recalling that $\rho^{\bar{u}} = \normal(x_0 + \bar{u},\id_d)$, we make the approximation \eqref{eq:approximation},
% \begin{align*}
% \eta^i := \frac{\tilde{\eta}^i}{\E[X_1^i \sim \normal(x_0 + \bar{u},\id_d)]{\tilde{\eta}^i}}
% \end{align*}
Then the optimal control is $(u\optada)^{(N)}  := \hat{X}_1 - x_0$ and $\E{(u\optada)^{(N)}} = u\optada = -\frac{x_0}{2}$ since $\E{\hat{X}_1} = \frac{x_0}{2}$. Hence,
\begin{align*}
\var{u_0^{(N)}} = \frac{1}{N}\left( \E{|\eta^i X_1^i|^2 - |u\optada|^2} \right).
\end{align*}
Recall that $\bar{u}=0$. Then, the calculations follow as: 
\begin{align*}
r_1 &:= \E[X_1^i \sim \normal(x_0 + \bar{u},\id_d)]{\tilde{\eta}^i} \\
&= \exp\left( - \frac{d\log 2 + |\bar{u}|^2 + |\bar{u} + x_0|^2}{2} - \frac{|x_0|^2}{4} \right).
\end{align*}
And,
\begin{align*}
&\E{|\eta^i X_1^i|^2} = \frac{\E[X_1^i \sim \normal(x_0 + \bar{u},\id_d)]{(\tilde{\eta}^i)^2|X_1^i|^2}}{r_1^2} \\
&= \frac{\E[X_1^i \sim \normal(x_0 + \bar{u},\id_d)]{|X_1^i|^2\exp\left( - \left| X_1^i + \bar{u}\right|^2 \right)}}{r_1^2} \\
&= \frac{e^{-r_2}}{r_1^2(\sqrt{3})^d} \left( \left|\frac{\bar{u} - x_0}{3} \right|^2 + \frac{d}{3} \right) \\
& r_2 := \half\left( |x_0 + \bar{u}|^2 + 2|\bar{u}|^2 - \frac{|\bar{u} - x_0|^2}{3} \right)
\end{align*}

\noindent\textbf{\IPS:}
%From \eqref{eq:generic-onestep-dt-enkf},
Recall that
\begin{align*}
\Y_1^i &= (\id_d + \Sigma_0^{(N)})\inv \Y_0^i + L_0^{(N)}(\zero - W_0^i) \\
L_0^{(N)} &= \Sigma^{(N)}_t(\Sigma^{(N)}_t + \id)\inv, \quad \Y_0^i \iid\normal(x_0,\id_d) \\
\hat{\Y}_1 &= \underbrace{(\id_d + \Sigma_0^{(N)})\inv}_{=:L_1} \underbrace{(\frac{1}{N}\sum_{i=1}^{N}\Y_0^i)}_{=:\tilde{\Y}} - \underbrace{L_0^{(N)}}_{=: L_2}( \underbrace{\frac{1}{N}\sum_{i=0}^{N} W_0^i}_{=:\tilde{W}})
\end{align*}
Then the optimal control is $(u_0^*)^{(N)}  := \hat{\Y}_1 - x_0$ and $\E{\hat{\Y}_1} = \frac{x_0}{2}$, so we have
%\begin{align*}
%&\var{u_0^{(N)}} = \E{\left|\hat{\Y}_1 - \frac{x_0}{2}\right|^2} \\
%&= \E{\left|L_1\tilde{\Y} + L_2\tilde{W} + (L_2 - \frac{1}{2}\id_d)x_0\right|^2} \\
%&= \E{\left|L_1\tilde{\Y}\right|^2 + \left| L_2\tilde{W}\right|^2 + \left|(L_2 - \frac{1}{2}\id_d)x_0\right|^2}
%\end{align*}
\begin{align*}
&\var{u_0^{(N)}} = \E{\left|\hat{\Y}_1 - \frac{x_0}{2}\right|^2} \\
&= \E{\left|L_1\tilde{\Y} + L_2\tilde{W} - \frac{1}{2}x_0\right|^2} \\
&= \E{\left|L_1\tilde{\Y}\right|^2 + \left| L_2\tilde{W}\right|^2} + \frac{|x_0|^2}{4}
\end{align*}
since $\tilde{\Y}$ and $\tilde{W}$ are independent random variables. To obtain bounds on the first two terms, we combine the following result with the fact that $\tilde{\Y} \sim \normal (x_0, \frac{1}{N}\id_d), \tilde{W} \sim \normal (0, \frac{1}{N}\id_d)$.
\begin{lemma}
For any $S_1 \succeq 0$ and $x \in \R^d$,
\begin{align*}
|(\id_d + S_1)\inv x|^2 &\le |x|^2, \\
|S_1(\id_d + S_1)\inv x|^2 &\le |x|^2,
\end{align*}
\end{lemma}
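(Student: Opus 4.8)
The plan is to exploit the fact that $S_1$ is symmetric positive semidefinite, so it admits a spectral decomposition $S_1 = Q\Lambda Q\tp$ with $Q$ orthogonal and $\Lambda = \mathrm{diag}(\lambda_1,\dots,\lambda_d)$, $\lambda_j \ge 0$. First I would note that $\id_d + S_1$ is invertible for any such $S_1$ (its eigenvalues $1+\lambda_j$ are all at least one), so the expressions in the statement are well defined, and that both of them are functions of $S_1$ realized through the same orthogonal $Q$: namely $(\id_d + S_1)\inv = Q(\id_d + \Lambda)\inv Q\tp$ and $S_1(\id_d + S_1)\inv = Q\,\Lambda(\id_d + \Lambda)\inv Q\tp$. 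Since $Q$ preserves the Euclidean norm, it suffices to bound the operator norms of the diagonal matrices $(\id_d + \Lambda)\inv$ and $\Lambda(\id_d + \Lambda)\inv$, and since these are symmetric their operator norms equal their spectral radii.

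Next, the two required inequalities reduce to elementary scalar estimates on the diagonal entries: for every $\lambda \ge 0$ one has $\frac{1}{(1+\lambda)^2} \le 1$ and $\frac{\lambda^2}{(1+\lambda)^2} \le 1$, the latter because $\lambda \le 1+\lambda$. Hence both $(\id_d + \Lambda)\inv$ and $\Lambda(\id_d + \Lambda)\inv$ have operator norm at most one, so $|(\id_d + \Lambda)\inv y|^2 \le |y|^2$ and $|\Lambda(\id_d + \Lambda)\inv y|^2 \le |y|^2$ for all $y \in \R^d$. Applying this with $y = Q\tp x$ and using $|Q\tp x| = |x|$ yields the two claimed bounds.

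There is essentially no obstacle here — the lemma is a routine consequence of the spectral theorem — so I would keep the write-up to a few lines. The only points worth stating explicitly are the invertibility of $\id_d + S_1$ and the reduction to the scalar inequalities; note also that the bound is tight (taking $S_1 = 0$ makes the first inequality an equality), so $\le |x|^2$ is exactly what one should state.
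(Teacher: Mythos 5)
Your proof is correct and follows exactly the route the paper indicates (spectral decomposition of $S_1$ reducing the claim to the scalar bounds $\frac{1}{(1+\lambda)^2}\le 1$ and $\frac{\lambda^2}{(1+\lambda)^2}\le 1$); it simply spells out the details the paper leaves implicit. No issues.
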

\begin{proof}
It follows by expanding the expression, and writing $S_1$ using its spectral decomposition.
\end{proof}
To bound the third term, we use the following result.
\begin{lemma}
For any $S_1 \succeq 0$ and $x \in \R^d$,
\begin{align*}
|(\half\id_d - S_1)x|^2 \le \frac{5}{4}|x|^2.
\end{align*}
\end{lemma}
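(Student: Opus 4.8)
The plan is to prove the scalar-eigenvalue inequality $|(\tfrac12\id_d - S_1)x|^2 \le \tfrac54|x|^2$ for all $S_1 \succeq 0$ by reducing it to a one-dimensional statement via the spectral decomposition of $S_1$. Writing $S_1 = Q \Lambda Q^\top$ with $Q$ orthogonal and $\Lambda = \operatorname{diag}(\lambda_1,\dots,\lambda_d)$, $\lambda_j \ge 0$, and setting $y := Q^\top x$ so that $|y| = |x|$, I get $|(\tfrac12\id_d - S_1)x|^2 = \sum_{j=1}^d (\tfrac12 - \lambda_j)^2 y_j^2$. Hence it suffices to show that $(\tfrac12 - \lambda)^2 \le \tfrac54$ for every $\lambda \ge 0$, i.e. that the scalar function $g(\lambda) := (\tfrac12 - \lambda)^2$ is bounded by $\tfrac54$ on $[0,\infty)$ — but this is plainly false for large $\lambda$, so the bound as literally stated cannot hold without an additional hypothesis.

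Reading the inequality in the context in which it is invoked (bounding the third term $\tfrac{|x_0|^2}{4}$ contribution in the IPS variance, where $S_1$ plays the role of $L_0^{(N)} = \Sigma_0^{(N)}(\Sigma_0^{(N)} + \id_d)^{-1}$), the relevant operator is not an arbitrary PSD matrix but one of the form $S(\id_d + S)^{-1}$ with $S \succeq 0$, whose eigenvalues all lie in $[0,1)$. So the statement I would actually prove is: for $S_1 \succeq 0$ with $S_1 \preceq \id_d$ (equivalently all eigenvalues in $[0,1]$), $|(\tfrac12\id_d - S_1)x|^2 \le \tfrac54|x|^2$. With the same spectral reduction this collapses to checking $\max_{\lambda \in [0,1]} (\tfrac12 - \lambda)^2 \le \tfrac54$; the maximum of the convex parabola $g(\lambda) = (\tfrac12-\lambda)^2$ on $[0,1]$ is attained at an endpoint, with $g(0) = g(1) = \tfrac14 \le \tfrac54$, which finishes the argument (the crude constant $\tfrac54$ leaves plenty of slack; one could even sharpen it to $\tfrac14$, but $\tfrac54$ is all that is needed downstream and presumably matches a uniform bound also covering the $L_1 = (\id_d+\Sigma_0^{(N)})^{-1}$ and $L_2 = \Sigma_0^{(N)}(\id_d+\Sigma_0^{(N)})^{-1}$ terms).

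Concretely, the steps I would carry out are: (i) diagonalize $S_1 = Q\Lambda Q^\top$ and substitute $y = Q^\top x$ to rewrite the left side as $\sum_j (\tfrac12 - \lambda_j)^2 y_j^2$; (ii) observe each $\lambda_j \in [0,1]$ from the structural form of the gain; (iii) bound $(\tfrac12 - \lambda_j)^2$ pointwise by its maximum over $[0,1]$, namely $\tfrac14$ (or simply by $\tfrac54$ directly), using convexity of the quadratic so that the max over the interval is at an endpoint; (iv) sum up using $\sum_j y_j^2 = |y|^2 = |x|^2$. This mirrors exactly the proof strategy already used for the preceding lemma ("expanding the expression, and writing $S_1$ using its spectral decomposition").

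The only genuine obstacle here is not computational but one of hypotheses: the inequality is false for general $S_1 \succeq 0$, so the proof must either invoke the additional bound $S_1 \preceq \id_d$ that holds for the specific gain matrices appearing in the IPS analysis, or the statement should be restated accordingly. Once that is pinned down, the proof is a two-line spectral reduction to a trivial scalar estimate, and I would expect the author's proof to be essentially the same as the one for the previous lemma. If instead the intended reading really is arbitrary $S_1 \succeq 0$, then the claim needs correction — the natural fix being to restrict to $\operatorname{eig}(S_1) \subseteq [0,1]$, which is automatically satisfied in every place the lemma is used.
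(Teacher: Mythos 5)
Your diagnosis is right: as literally stated the inequality fails for PSD matrices with large eigenvalues, and the paper's own proof confirms that the intended operator is $\half\id_d - S_1(\id_d+S_1)\inv$ rather than $\half\id_d - S_1$ --- the very first line of the paper's argument manipulates $\half\id_d - S_1(\id_d+S_1)\inv$, so the lemma statement carries a typo, and the matrix being bounded in the application is exactly the gain $L_0^{(N)}=\Sigma_0^{(N)}(\Sigma_0^{(N)}+\id_d)\inv$ you identified. Where you differ from the paper is in method: you diagonalize and reduce to the scalar bound $\max_{\lambda\in[0,1]}(\half-\lambda)^2=\tfrac14$, whereas the paper rewrites $\half\id_d - S_1(\id_d+S_1)\inv = (\id_d+S_1)\inv - \half\id_d$, squares, discards the negative semidefinite cross term $-(\id_d+S_1)\inv$, and bounds $(\id_d+S_1)^{-2}\preceq \id_d$ to land on $\tfrac54$. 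Both arguments are correct for the corrected statement; yours is sharper (the true constant is $\tfrac14$, since the eigenvalues of $(\id_d+S_1)\inv-\half\id_d$ lie in $(-\half,\half]$ for $S_1\succeq 0$), while the paper's purely algebraic route avoids invoking the spectral theorem but pays for dropping the cross term with the looser constant $\tfrac54$ that then propagates into the bound $\mse_{\text{\IPS}}\le \tfrac{2d}{N}+\tfrac54|x_0|^2$. Your observation that the hypothesis $S_1\preceq\id_d$ (automatic for matrices of the form $S(\id_d+S)\inv$) is what rescues the statement is exactly the right fix.
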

% \begin{proof}
\hspace{12pt}\textit{Proof:}
First note that
\begin{align*}
\half\id_d - S_1(\id_d + S_1)\inv &= (\id_d + S_1)\inv - \half\id \\
((\id_d + S_1)\inv - \half\id_d)^2 &= (\id_d + S_1)^{-2} + \frac{1}{4}\id_d - (\id_d + S_1)\inv
\end{align*}
Hence,
\begin{align*}
|(\half\id_d - S_1)x|^2 &= x\tp ((\id_d + S_1)\inv - \half\id_d)^2 x \\
& \le x\tp((\id_d + S_1)^{-2} + \frac{1}{4}\id_d)x \le \frac{5}{4}|x|^2. \quad \rule{0.6em}{0.6em}
\end{align*}
% \end{proof}
This concludes the proof of Prop. \ref{prop:onestep}.

\end{document}